\newtheorem{thm}{Theorem}
\newtheorem{prop}{Proposition}
\newtheorem{lem}{Lemma} 
\newtheorem{de}{Definition}
\newtheorem{cor}{Corollary}
\newtheorem{preremark}{Remark}
  \newenvironment{rk}{\begin{preremark}\rm}{\end{preremark}}
\newcommand{\R}{\mathbb{R}}
\newcommand{\N}{\mathbb{N}}
\renewcommand{\P}{\mathbb{P}}
\newcommand{\E}{\mathbb{E}}
\newcommand{\Om}{\Omega}
\newcommand{\om}{\omega}
\renewcommand{\d}[2]{\dfrac{\partial #1}{\partial #2}}
\newcommand{\dd}[3]{\dfrac{\partial^{#1} #2}{\partial #3^{#1}}}
\newcommand{\ddd}[3]{\dfrac{\partial^{2} #1}{\partial #2 \partial #3}}
\newcommand{\dt}{\frac{\mathrm{d}}{\mathrm{d}t}}
\newcommand{\ds}{\: \mathrm{d}s}
\renewcommand{\a}{\alpha}
\renewcommand{\b}{\beta}
\newcommand{\g}{\gamma}
\newcommand{\dl}{\delta}
\renewcommand{\k}{\kappa}
\newcommand{\Dl}{\Delta}
\newcommand{\eps}{\epsilon}
\renewcommand{\l}{\lambda}
\newcommand{\lb}{\llbracket}
\newcommand{\rb}{\rrbracket}
\newcommand{\htun}{h_t^{(1)}}
\newcommand{\htde}{h_t^{(2)}}
\newcommand{\ktun}{k_t^{(1)}}
\newcommand{\ktde}{k_t^{(2)}}
\newcommand{\hinfun}{h_{\infty}^{(1)}}
\newcommand{\hinfde}{h_{\infty}^{(2)}}
\newcommand{\httun}{\tilde h_t^{(1)}}
\newcommand{\httde}{\tilde h_t^{(2)}}
\newcommand{\Httun}{\tilde H_t^{(1)}}
\newcommand{\Httde}{\tilde H_t^{(2)}}
\newcommand{\G}{\Gamma}
\newcommand{\Ginf}{\G_{\infty}}
\newcommand{\bsl}{\backslash}
\newcommand{\Xn}{X^{(n)}}
\newcommand{\Cn}{C^{(n)}}
\newcommand{\An}{A^{(n)}}
\newcommand{\Bn}{B^{(n)}}
\newcommand{\Mnp}{M^{p_0,(n)}}
\newcommand{\Mn}{M^{(n)}}
\newcommand{\Gn}{G^{(n)}}
\newcommand{\MAn}{M^{A,(n)}}
\newcommand{\MBn}{M^{B,(n)}}
\newcommand{\D}{\mathbb{D}}
\newcommand{\DE}{\D([0,+ \infty),E)}
\renewcommand{\DH}{\D([0,+ \infty),H)}
\renewcommand{\le}{\lambda_{\eta}}
\newcommand{\len}{\le^{(n)}}
\newcommand{\Dp}{\Delta_{p,p'}}
\newcommand{\cinf}{c_{\infty}}
\newcommand{\ginf}{g_{\infty}}
\def\maketitle{%
  \null
  \thispagestyle{empty}%
  \begin{center}\leavevmode
    \normalfont
    {\LARGE \@title\par}%
    \vskip 0.6cm
    {\large \@author\footnote{E-mail address: \@email.}\par}%
    \vskip 0.6cm
    {\footnotesize \it \@location\par}%
  \end{center}%
  \null
  }
\def\location#1{\def\@location{#1}}
\def\email#1{\def\@email{#1}}
\begin{document}

\title{A model for coagulation with mating}
\author{Raoul Normand}
\location{Laboratoire de Probabilit\'es et Mod\`eles Al\'eatoires, UPMC, \\ 175 rue du Chevaleret, 75013 Paris, France}
\email{raoul.normand@upmc.fr}
\maketitle

\begin{abstract}
We consider in this work a model for aggregation, where the coalescing particles initially have a certain number of potential links (called arms) which are used to perform coagulations. There are two types of arms, male and female, and two particles may coagulate only if one has an available male arm, and the other has an available female arm. After a coagulation, the used arms are no longer available. We are interested in the concentrations of the different types of particles, which are governed by a modification of Smoluchowski's coagulation equation --- that is, an infinite system of nonlinear differential equations. Using generating functions and solving a nonlinear PDE, we show that, up to some critical time, there is a unique solution to this equation. The Lagrange Inversion Formula allows in some cases to obtain explicit solutions, and to relate our model to two recent models for limited aggregation. We also show that, whenever the critical time is infinite, the concentrations converge to a state where all arms have disappeared, and the distribution of the masses is related to the law of the size of some two-type Galton-Watson tree. Finally, we consider a microscopic model for coagulation: we construct a sequence of Marcus-Lushnikov processes, and show that it converges, before the critical time, to the solution of our modified Smoluchowski's equation.
\end{abstract}

\begin{scriptsize}
\noindent \textit{MSC}: Primary 34A34, 60K35; Secondary 60B12, 82C23, 82D60

\noindent \textit{Keywords}: Coagulation equations; Gelation; Generating functions; Method of characteristics; Marcus-Lushnikov Process; Hydrodynamic limit
\end{scriptsize}

\section{Introduction}

In this work, we study a model for coagulation of particles, generalizing the original model of Smoluchowski \cite{Smolu}, and a recent model of Bertoin \cite{B}. We consider particles which are initially given a certain number of male and female arms. These arms are used to perform the coagulations: two particles coagulate when a male arm of one and a female arm of another bind. This can be used to model the formation of polymers. For instance, consider male particles (which have only male arms), and female particles. Then a coagulation between a male and a female particle can be thought of as an ionic bond between a cation and a anion. This kind of models has also been investigated in the physical literature. For instance, in \cite{Redner1}, \cite{Redner2}, the authors study coalescing monomers with two types, A and B, with bonding only allowed between A and B, hence forming alternating linear polymers. In this work, this corresponds to giving to each particle exactly one male arm and one female arm.

In our model, a particle is characterised by a triple $(a,b,m)$, $a \in \N$ being its number of male arms, $b \in \N$ its number of female arms, and $m \in \N^*$ its mass. Two particles may coagulate when one has an available female arm and the other has an available male arm, and when a coagulation occurs, the used arms disappear. Hence, we may only observe the transition
\[
\{(a,b,m),(a',b',m')\} \to (a+a'-1,b+b'-1,m+m').
\]
We will assume that this transition occurs with a rate given by the number of pairs formed of a female arm and of a male arm, that is $a'b+ab'$. We wish to study how the concentration of each type of particle evolves when time passes. The precise mathematical formulation is given in Section 2.

This model is a modification of the well-known model of Smoluchowski \cite{Smolu}. Recall that Smoluchowski's coagulation equations \cite{Smolu} describe the evolution of the concentrations of particles in a medium, where particles are characterised only by their masses. When two particles of masses $m$ and $m'$ coagulate, they merge into a single particle of mass $m+m'$. Such a coagulation occurs with rate $\k(m,m')$, where $\k$ is some symmetric nonnegative kernel. In Smoluchowski's original model, the masses are assumed to be positive integers. The concentration $c_t(m)$ of particles of mass $m$ is governed by the following infinite system of nonlinear differential equations
\[
\dt c_t(m)=\frac{1}{2} \sum_{m'=1}^{m-1} c_t(m')c_t(m-m') \k(m',m-m') - c_t(m) \sum_{m'=1}^{+\infty} c_t(m') \k(m,m'),
\]
for $m \in \N^*$. The first term accounts for the creation of particles of mass $m$ by coagulation of particles of mass $m'$ and $m-m'$; the second for disappearance of particles of mass $m$ by coagulation with other particles.

For general kernels $\k$, explicit solutions are not known. However, some have been obtained in different cases, notably whenever the kernel is constant \cite{Smolu}, additive \cite{Gol} or multiplicative \cite{Leod}. In the multiplicative case, solutions are obtained up to a critical time, known as the gelation time. This is interpreted as the time when a particle of infinite mass appears. It absorbs some of the particles, and the total mass starts to decrease.

Smoluchowski's equation (and some variations) have been extensively studied, both from an analytical (e.g. \cite{CdC}, \cite{EMP}, \cite{Rezaana}) and a probabilistic point of view (e.g. \cite{DT}, \cite{Rezaproba}, \cite{Jeon}, \cite{Norris}, and see also the review by Aldous \cite{Aldous}). In general, little is known after the gelation time, and most results are obtained before (see however \cite{FG}, \cite{FL}). The existence and uniqueness of a solution before gelation has been obtained only in 1999 by Norris \cite{Norris}, under the assumption that $\k$ is sublinear, i. e. $\k(n,m)/(nm)$ is bounded.

From a probabilistic point of view, some microscopic models have been studied, beginning with Marcus \cite{Marcus} and Lushnikov \cite{Lushnikov}. Heuristically, one considers a finite number of particles, and each couple of particles with masses $m$ and $m'$ coalesces with rate $\k(m,m')$. After suitable change of time and renormalization, one expects this system to converge to a solution of Smoluchowski's equation. This has been shown by Jeon in 1998 \cite{Jeon} (up to extraction of a subsequence), provided the rate is strictly sublinear (i.e. $\k(n,m)/n \to 0$ when $n \to + \infty$). In particular, there is no gelation in this case. Norris \cite{Norris} extended his results one year later by showing the convergence of the model before gelation, whenever the rate is sublinear. Other points of view are also considered; e.g. in \cite{Rezaproba}, the authors show that coagulating Brownian particles form clusters whose size evolves according to Smoluchowski's equation.

An interesting question is to deal with the case when the coagulations are restricted by some device. Typically, one may think of covalent bonds: a given atom can only perform a given number of bonds. In this direction, Bertoin \cite{B} studied two models where a particle is characterised by its number of arms and by its mass, and it uses its arms to perform aggregations. The concentrations of each type of particle is governed by a modification of Smoluchowski's equation. In \cite{B}, he obtains solution up to some time $T$, and shows that whenever gelation does not occur (i.e. $T=+ \infty$), there is a limit state where all the arms have disappeared: the concentrations converge to limiting concentrations which bear a striking resemblance with the law of the size of some Galton-Watson tree. This fact is explained in \cite{B2} and \cite{B3}. It is also worth noticing that Bertoin's model can be related to Smoluchowski's for the constant, additive and multiplicative kernels. We will also see that our sexed model contains Bertoin's: the oriented model corresponds indeed to ours if each particle is given precisely one female arm, and the symmetric model corresponds to the sexed one if the particles are given a gender uniformly at random.

This paper is divided in two parts. In the first one (Sections 2 to 5), we shall study the sexed Smoluchowski's equation, which is an infinite system of nonlinear differential equations. We first (Section 2) introduce the problem, and prove some physically intuitive facts. Then (Section 3), we prove our main result: up to some critical time, there exists a unique solution to the system, and its moment generating function can be expressed explicity in terms of the initial data. The tools used are analoguous to those in \cite{B}, but since we are dealing with a two-dimensional problem, several technical issues need to be addressed. The outline of the proof is as follows. First, we transform the system into a PDE problem by considering the generating functions of the concentrations. This PDE is not quasilinear, but it may however be solved by the method of characteristics. This method requires the inversion of a two-dimensional mapping, and this can be done precisely up to the critical time. Unfortunately, even for monodisperse initial conditions (i.e. there are only particles of mass 1 at time 0), the inversion is not explicit (one could use the two-variable Lagrange inversion formula, but in general, the expression it provides is too cumbersome). Nonetheless, in some specific cases (Section 4), the Lagrange Inversion Formula yields explicit results. In particular, we recover the solutions obtained in \cite{B}. Finally, we show (Section 5) that there exist limiting concentrations when $t \to + \infty$, and that they are related to the distribution of the total progeny of some two-type Galton-Watson process.

In the second part (Section 6), we study a microscopic model. Given a finite number of particles, we let them coagulate and observe the evolution of the concentrations of the different types of particles. This is a Marcus-Lushnikov process, and we show that it converges, before the critical time, to a process solving Smoluchowski's equation \eqref{syst}. As pointed out earlier, this kind of convergence had already been proved by Norris (\cite{Norris}, see as well \cite{Jeon}). The difference here is that we consider a model with male or female arms. Moreover, the proof is made much easier by the fact that the rate of coagulation is explicit. In particular, we will appeal to the PDE obtained in the first part. This discrete model provides a justification to the sexed Smoluchowski's equation \eqref{syst}.

Finally, note that our construction can also provide a model for random oriented graphs, called the configuration model, since a coagulation can be seen as the creation of an oriented edge between two vertices in a graph, whose orientation is given e.g. from the male arm to the female arm. Hence, we can consider a large number $n$ of particles and let them coagulate. When all the coagulations are performed, we obtain a set of oriented graphs. When $n \to + \infty$, we may wonder what the distribution of their sizes is, what a typical graph looks like, etc. A heuristic answer, motivated by the works \cite{B2}, \cite{B3}, and by the results obtained in this paper (Section 5), is that a typical graph would be a two-type Galton-Watson tree (with the convention of orientation above), provided there are few arms (with the notations of this paper, this means $T_c=+ \infty$ and $\mu$ is not degenerate).

\section{Setting and results}

\subsection{Notations}

Let us first introduce some notations and Smoluchowski's equation, and state our main result.
\begin{itemize}
\item $\N=\{0,1,2,\dots\}$ and $\N^*=\{1,2,\dots\}$.
\item $S=\N \times \N \times \N^*$ is the set of the different types of particles. A generic element of $S$ will be denoted by $p$, and if $p=(a,b,m)$, we will call a $p$-particle a particle with $a$ male arms, $b$ female arms, and mass $m$.
\item For $p=(a,b,m) \in S$ and $p'=(a',b',m') \in S$, we will denote
\[
p.p'=a'b+ab'
\]
the \textit{rate} of coagulation and
\[
p \circ p'=(a+a'-1,b+b'-1,m+m')
\]
the type of the particle resulting from such a coagulation. We say that $p' \preceq p$ if $a' \leq a+1$, $b' \leq b+1$ and $m' \leq m-1$. When $p' \preceq p$, we write
\[
p \bsl p' = (a+1-a',b+1-b',m-m')
\]
the type of particle such that $p' \circ (p \bsl p')=p$.
\item For two functions $c, f : S \to \R$, we will denote, when the series converge absolutely,
\[
\langle c,f \rangle := \sum_{p \in S} c(p) f(p).
\]
When using this notation, we will write, with a slight abuse of notation, $a$ for the function $(a,b,m) \mapsto a$, $b$ for $(a,b,m) \to b$, etc.
\end{itemize}
Let us recall our goal. We are interested in a system of coagulating particles with male and female arms. We assume that each couple formed of a $p$-particle and of a $p'$-particle coagulate with rate $p.p'$, to form a $p \circ p'$-particle. This means that if we denote $c_t(p)$ the concentration of $p$-particles then $(c_t(p),p \in S)$ solves the following infinite system of nonlinear differential equations
\begin{equation}
\dt c_t(p)= \frac{1}{2}\sum_{p' \preceq p} p'.(p \bsl p') c_t(p')c_t(p \bsl p') - c_t(p) \sum_{p' \in S} p.p' c_t(p').
\label{syst}
\end{equation}
The first term accounts for the creation of $p$-particles by coagulation of $p'$- and $p \bsl p'$-particles (the factor $1/2$ comes from an obvious symmetry). The second accounts for the disappearence of $p$-particles by coagulation with other particles. Let us once write down this formula explicitly. For all $(a,b,m) \in S$, the concentration of $(a,b,m)$-particles verifies
\[
\begin{split}
\dt c_t(a,b,m) = & \frac{1}{2} \sum_{m'=1}^{m-1} \sum_{a'=0}^{a+1} \sum_{b'=0}^{b+1} (a'(b+1-b')+b'(a+1-a')) \times \\
 &  c_t(a',b',m')c_t(a+1-a',b+1-b',m-m') \\
 & - \sum_{m' \geq 1} \sum_{a' \geq 0} \sum_{b' \geq 0} (ab'+a'b) c_t(a,b,m) c_t(a',b',m').
\end{split}
\]
Let us now define what we call a \textit{solution} to Smoluchowski's equation.
\begin{de}
We call a family $\left ( (c_t(p))_{p \in S}, t \in [0,T) \right )$ of differentiable functions a solution of Smoluchowski's equation (or of system \eqref{syst}), if
\begin{enumerate}
\item For every $t \in [0,T)$, $\langle a + b, |c_t| \rangle < + \infty$,
\item $\langle a^2 + b^2, |c_t| \rangle < + \infty$ for $t$ in a neighbourhood of $0$,
\item The family $(c_t(p))$ solves the system \eqref{syst} for $t \in [0,T)$.
\end{enumerate}
\end{de}

\begin{rk}
\begin{itemize}
\item We will always assume that at time 0, $\langle a + b +1 , c_0 \rangle < + \infty$, and that the mean number of male arms $\langle a,c_0 \rangle$ and the mean number of female arms $\langle b, c_0 \rangle$ are equal. Physically, it is then obvious that they will remain equal as time passes. This shall be proven later on, in Lemma \ref{atbt}.
\item It is easy to see that if $(c_t)_{t \in [0,T)}$ is a solution to \eqref{syst} with initial conditions $c_0$, and $\lambda > 0$, then $(\lambda c_{t/\lambda^2})_{t \in [0,T)}$ is a solution to \eqref{syst} with initial conditions $\lambda c_0$. Hence, it is enough to assume that $\langle a, c_0 \rangle = \langle b, c_0 \rangle = 1$, what will always be the case from now on.
\end{itemize}
\end{rk}

\subsection{Main result}

Our main result is existence and uniqueness of a solution to \eqref{syst} up to a critical time. In all the statements and proofs, we are given nonnegative initial concentrations $c_0$ such that $\langle 1 ,c_0 \rangle < + \infty$, $\langle a,c_0 \rangle = \langle b, c_0 \rangle = 1$ and $\langle a^2+b^2, c_0 \rangle < + \infty$. We can then define the critical time $T_c$.
\begin{de}
Let
\[
M=\langle ab,c_0 \rangle + \sqrt{ \langle a^2 - a ,c_0 \rangle \langle b^2 - b ,c_0 \rangle }
\]
and
\[
T_c =
\left \{
\begin{array}{ll}
+\infty & \qquad \mathrm{if} \;\; M \leq 1 \\
\frac{1}{M-1} & \qquad \mathrm{if} \;\; M > 1.
\end{array}
\right.
\label{Tc}
\]
\end{de}

\noindent We will also constantly use the generating function of $(c_0)$
\[
g_0(x,y,z) := \sum_{(a,b,m) \in S} c_0(a,b,m) x^a y^b z^m.
\]
Since $\langle 1, c_0 \rangle < + \infty$, $g_0$ is well-defined on $[0,1]^3$. Using the assumption $\langle a + b , c_0 \rangle =1$, and e.g. monotone convergence, we also see that its partial derivatives with respect to $x$ and $y$ are well-defined and continuous on $[0,1]^3$ . For the same reason, they remain in $[0,1]$. We shall prove the following result.

\begin{thm}
\begin{description}
\item[(i)] Smoluchowski's equation \eqref{syst} with initial conditions $c_0$ has a unique solution $(c_t)$ defined on $[0,T_c)$.
\item[(ii)] For $t \in [0,T_c)$, $\langle a^2+b^2, c_t \rangle < + \infty$, and $\langle a^2+b^2, c_t \rangle \to + \infty$ when $t \to T_c$.
\item[(iii)] For $t \in [0,T_c)$ and $z \in [0,1]$, the mapping $\phi_t(.,.,z)$, given for $(x,y) \in [0,1]^2$ by 
\[
\phi_t(x,y,z) = \left ( (1+t)x-t \d{g_0}{y}(x,y,z), (1+t)y-t \d{g_0}{x}(x,y,z) \right ),
\]
has a right inverse $h_t = (\htun,\htde)$ which is well-defined and analytic on $(0,1)^2$. Then the generating function $g_t$ of $(c_t)$ is given by
\begin{equation}
g_t(x,y,z) = \frac{1}{1+t} \left ( \Httde(x,y,z)+ \Httun(0,y,z) \right ) + G_t(z).
\label{gtxyz}
\end{equation}
where for $t > 0$
\begin{equation}
\httun = \frac{1+t}{t}h_t^{(1)}(x,y,z) - \frac{x}{t} \qquad ; \qquad \httde := \frac{1+t}{t}h_t^{(2)}(x,y,z) - \frac{y}{t} 
\label{defhtt}
\end{equation}
and
\begin{itemize}
\item $\Httun$ is the antiderivative of $\httun$ with respect to $y$, vanishing at $y = 0$,
\item $\Httde$ is the antiderivative of $\httde$ with respect to $x$, vanishing at $x = 0$,
\item $G_t(z)$ is the antiderivative of 
\begin{equation}
\d{g_0}{z} \left ( h_t^{(1)}(0,0,z),h_t^{(2)}(0,0,z),z \right )
\label{Gt}
\end{equation}
with respect to $z$, vanishing at 0.
\end{itemize}
\item[(iv)] The total mass $\langle m , c_t \rangle$ is constant on $[0,T_c)$.
\end{description}
\label{grosthm}
\end{thm}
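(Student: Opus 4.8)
\emph{Plan.} The strategy is to encode the whole system \eqref{syst} into a single first-order PDE for the generating function $g_t$, to solve that PDE by the method of characteristics, and then to read off existence, uniqueness and the quantitative claims (ii)--(iv) from the explicit characteristic flow, which will turn out to be exactly $\phi_t$.

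First I would multiply \eqref{syst} by $x^a y^b z^m$ and sum over $p=(a,b,m)\in S$, the moment bounds in the definition of a solution justifying the interchange of $\dt$ and $\sum$. Because the rate $p.p'=a'b+ab'$ is bilinear, the loss term collapses to $-\langle a,c_t\rangle\, y\,\d{g_t}{y}-\langle b,c_t\rangle\, x\,\d{g_t}{x}$, while the convolution in the gain term factorises, after extracting the prefactor $(xy)^{-1}$ coming from $p'\circ p''$, into $\d{g_t}{x}\,\d{g_t}{y}$. Using Lemma \ref{atbt} together with the elementary identity $\dt\langle a,c_t\rangle=-\langle a,c_t\rangle\langle b,c_t\rangle$ (so that $\langle a,c_t\rangle=\langle b,c_t\rangle=1/(1+t)$), this yields
\[
\d{g_t}{t}=\d{g_t}{x}\,\d{g_t}{y}-\frac{1}{1+t}\left(x\,\d{g_t}{x}+y\,\d{g_t}{y}\right),
\]
in which $z$ is a mere parameter.

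Since this equation is nonlinear in the gradient, I would differentiate it in $x$ and in $y$ to obtain a quasilinear system for $p=\d{g_t}{x}$ and $q=\d{g_t}{y}$, whose characteristic ODEs read $\dtt{x}=x/(1+t)-q$, $\dtt{y}=y/(1+t)-p$, $\dtt{p}=-p/(1+t)$, $\dtt{q}=-q/(1+t)$. Integrating the last two gives $p_t=p_0/(1+t)$, $q_t=q_0/(1+t)$ with $(p_0,q_0)=(\d{g_0}{x},\d{g_0}{y})(x_0,y_0,z)$; feeding this into the first two linear ODEs and integrating shows that the characteristic started at $(x_0,y_0)$ reaches $\phi_t(x_0,y_0,z)$ at time $t$. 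Hence the right inverse $h_t$ sends a point to the foot of its characteristic, the affine rescalings \eqref{defhtt} identify $\httun=\d{g_0}{y}\circ h_t$ and $\httde=\d{g_0}{x}\circ h_t$, and therefore $\d{g_t}{x}=\httde/(1+t)$ and $\d{g_t}{y}=\httun/(1+t)$. Integrating in $x$, then in $y$ along $x=0$, and fixing the remaining $z$-dependent constant by observing that $w=\d{g_t}{z}$ solves a linear transport equation along the \emph{same} characteristics (so $\d{g_t}{z}(0,0,z)=\d{g_0}{z}(h_t(0,0,z),z)$), produces exactly formula \eqref{gtxyz} with $G_t$ given by \eqref{Gt}. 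The whole construction hinges on the invertibility of $\phi_t(\cdot,\cdot,z)$: I would compute $\det D\phi_t=((1+t)-t\,\ddd{g_0}{x}{y})^2-t^2\,\dd{2}{g_0}{x}\,\dd{2}{g_0}{y}$ and check that on $[0,1]^3$ it is smallest at $(1,1,1)$, where it equals $((1+t)-t\langle ab,c_0\rangle)^2-t^2\langle a^2-a,c_0\rangle\langle b^2-b,c_0\rangle$; this is positive precisely for $t<T_c$, its first zero being $1/(M-1)$ and never vanishing when $M\le1$, which is exactly the definition \eqref{Tc}. A nonvanishing analytic Jacobian makes $h_t$ analytic (giving (iii)); uniqueness in (i) follows since any solution forces its generating function to solve the PDE and hence to coincide with the explicit $g_t$; for (ii) one writes $\langle a^2,c_t\rangle=\dd{2}{g_t}{x}(1,1,1)+\d{g_t}{x}(1,1,1)$ with $\dd{2}{g_t}{x}=\d{\httde}{x}/(1+t)$ involving $Dh_t=(D\phi_t)^{-1}$, finite for $t<T_c$ and blowing up as $\det D\phi_t(1,1,1)\to0$; and (iv) follows from $\phi_t(1,1,1)=(1,1)$ (because $\d{g_0}{x}(1,1,1)=\d{g_0}{y}(1,1,1)=1$) together with the invariance of $w$ along characteristics, giving $\langle m,c_t\rangle=\d{g_t}{z}(1,1,1)=\d{g_0}{z}(1,1,1)=\langle m,c_0\rangle$.

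The hard part will be the two \emph{global} statements behind (i) and (iii), the PDE manipulation above being essentially bookkeeping. One must upgrade the positivity of $\det D\phi_t$ from a local to a global statement, showing that $\phi_t(\cdot,\cdot,z)$ is a genuine bijection onto a region containing $(0,1)^2$ for every $t<T_c$, so that $h_t$ is globally and analytically well-defined; and one must verify that the analytic function $g_t$ manufactured from $h_t$ is truly the generating function of a nonnegative, suitably summable family $(c_t(p))$ solving \eqref{syst}, i.e. that its Taylor coefficients are nonnegative and satisfy conditions 1--3 of the definition. I expect the nonnegativity of the coefficients and the sharp moment bounds --- ultimately read off from the probabilistic/Lagrange-inversion interpretation of $h_t$ developed in Sections 4 and 5 --- to be the principal obstacle.
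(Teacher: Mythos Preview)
Your overall strategy --- reduce \eqref{syst} to the first-order PDE for $g_t$, solve by characteristics to obtain $\phi_t$ and $h_t$, and then read off (i)--(iv) --- is exactly the paper's. The bookkeeping parts (derivation of the PDE, the characteristic ODEs, formula \eqref{gtxyz}, and the arguments for (ii) and (iv) via differentiating \eqref{dgtdxphit} and passing to $(1,1,1)$) match the paper almost line for line.

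The two places you flag as ``hard'' are indeed where the work lies, but your proposed attacks on them differ from the paper's and each has a gap. For the global invertibility of $\phi_t(\cdot,\cdot,z)$, you compute $\det D\phi_t$ and note it is minimised at $(1,1,1)$; this is correct, but a nonvanishing Jacobian on $[0,1]^2$ gives only local invertibility, and you offer no mechanism for the upgrade. The paper sidesteps this entirely: it rewrites $\phi_t(x,y)=(u,v)$ as a fixed-point equation $(x,y)=F_t(x,y)$ and shows $F_t$ is a contraction on $[0,1]^2$. The subtlety is that $\|DF_t\|<1$ requires a \emph{tailored} norm (the operator norm of $\frac{t}{1+t}\bigl(\begin{smallmatrix}\alpha&\beta\\\gamma&\alpha\end{smallmatrix}\bigr)$ must be pushed down to its spectral radius $\frac{t}{1+t}(\alpha+\sqrt{\beta\gamma})$, which is $<1$ exactly for $t<T_c$); the paper isolates this as a separate matrix lemma. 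Banach's fixed-point theorem then gives global bijectivity and continuity of $h_t$ in one stroke.

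For the nonnegativity of the Taylor coefficients of the constructed $g_t$, you anticipate needing the Lagrange-inversion/probabilistic machinery of Sections~4--5. That would be circular (those sections assume Theorem~\ref{grosthm}) and in any case unnecessary: the paper observes that the coefficients $c_t(p)$ of the local expansion of $g_t$ satisfy the triangular system $\dt c_t(p)=\frac12\sum_{p'\preceq p}p'.(p\backslash p')c_t(p')c_t(p\backslash p')-\frac{a+b}{1+t}c_t(p)$, and a short induction on $m$ (exactly as in Lemma~\ref{possol}) shows any solution with nonnegative initial data stays nonnegative. A connectedness argument on $[0,T_c)$ then propagates this, and nonnegativity lets one extend the power series to $[0,1)^3$. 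The moment bounds follow from the explicit formulae for $\partial^2 g_t$ at $(1,1,1)$ that you already have for (ii), not from any combinatorial identity.
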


\subsection{Preliminary results}

In this section, we give some physically intuitive results, and deduce the ``weak'' form of the equation. Let us start with the following lemma (recall that $c_t(p)$ is meant to model a concentration).

\begin{lem}
Any solution to Smoluchowski's equation remains nonnegative, i.e. if $(c_t)_{t \in [0,T)}$ is a solution to \eqref{syst}, then for all $t \in [0,T)$ and $p \in S$, $c_t(p) \geq 0$.
\label{possol}
\end{lem}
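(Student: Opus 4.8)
The plan is to read \eqref{syst} one coordinate at a time, viewing the $p$-component as a scalar linear ODE with a source term. For fixed $p=(a,b,m)$, write
\[
\dt c_t(p) = Q_t(p) - L_t(p)\, c_t(p),
\]
where the \emph{gain} is $Q_t(p) = \frac{1}{2}\sum_{p' \preceq p} p'.(p\bsl p')\, c_t(p')\, c_t(p\bsl p')$ and the \emph{loss rate} is $L_t(p) = \sum_{p' \in S} p.p'\, c_t(p') = a\langle b, c_t\rangle + b \langle a, c_t\rangle$. Two structural facts drive the argument. First, $L_t(p)$ is finite for every $t \in [0,T)$ by assumption (1) of the definition of a solution, and its sign is irrelevant for what follows. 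Second, $Q_t(p)$ is a \emph{finite} sum --- $a', b'$ range over finite sets and $m'$ over $\{1, \dots, m-1\}$ --- and it only involves concentrations of strictly smaller mass, since $p' \preceq p$ forces $m' \leq m-1$ while $p \bsl p'$ has mass $m-m' \leq m-1$. Moreover, when $p' \preceq p$ the coefficient $p'.(p\bsl p') = a'(b+1-b') + b'(a+1-a')$ is nonnegative, because $a' \leq a+1$ and $b' \leq b+1$.

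These facts suggest an induction on the mass $m$. For the base case $m=1$, the constraint $m' \leq m-1 = 0$ leaves the gain term empty, so $\dt c_t(a,b,1) = -L_t(a,b,1)\, c_t(a,b,1)$ and the integrating factor gives $c_t(a,b,1) = c_0(a,b,1)\exp(-\int_0^t L_s(a,b,1)\,\mathrm{d}s) \geq 0$, since $c_0 \geq 0$. For the inductive step, assume $c_t(p') \geq 0$ for all $p'$ of mass strictly less than $m$ and all $t \in [0,T)$; then, by the nonnegativity of its coefficients and summands, $Q_t(p) \geq 0$ for every $p$ of mass $m$. Solving the scalar ODE by variation of constants yields
\[
c_t(p) = c_0(p)\, e^{-\int_0^t L_s(p)\,\mathrm{d}s} + \int_0^t e^{-\int_s^t L_u(p)\,\mathrm{d}u}\, Q_s(p)\,\mathrm{d}s,
\]
which displays $c_t(p)$ as a sum of two manifestly nonnegative terms, completing the induction.

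The main technical point to secure is that the variation-of-constants representation is legitimate, i.e. that $s \mapsto L_s(p)$ is regular enough (locally integrable on $[0,T)$ will do) for the integrating factor $\exp(-\int_0^t L_s(p)\,\mathrm{d}s)$ to make sense and for the Duhamel formula to hold; one must also check that $s \mapsto Q_s(p)$ is continuous, which is immediate since it is a finite sum of products of the (differentiable, hence continuous) functions $c_s(p')$. Regularity of $L_s(p)$ reduces, via $L_s(p) = a\langle b, c_s\rangle + b\langle a, c_s\rangle$, to that of the two moments $s \mapsto \langle a, c_s\rangle$ and $s \mapsto \langle b, c_s\rangle$; I would establish their continuity by a dominated-convergence argument resting on assumption (1), or alternatively avoid the issue by truncating the system at mass $m$ and passing to the limit. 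This verification is routine but is the one place where the definition of a solution, rather than just the algebraic structure of \eqref{syst}, is genuinely used.
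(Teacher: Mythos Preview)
Your proof is correct and follows essentially the same approach as the paper: induction on the mass $m$, reading each component as a scalar linear ODE with a nonnegative source term built from lower-mass concentrations, and solving via an integrating factor. Your discussion of the regularity of $s \mapsto L_s(p)$ is in fact more careful than the paper, which simply writes $G(t)=\int_0^t \gamma(s)\,\mathrm{d}s$ without comment.
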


\begin{proof}
Take some $t \in [0,T)$. System \eqref{syst} gives
\[
\dt c_t(a,b,1) = - c_t(a,b,1) \sum_{m' \geq 1} \sum_{a' \geq 0} \sum_{b' \geq 0}  (ab'+a'b) c_t(a',b',m') :=  - \g(t) c_t(a,b,1).
\]
Let $G(t)=\int_0^t \g(s) \; \mathrm{d}s$. Then $c_t(a,b,1)=c_0(a,b,1) e^{-G(t)}$, so it remains nonnegative. Let now $m \geq 1$, and suppose that the $c_t(a,b,m')$ are nonnegative for $a,b \geq 0$ and $1 \leq m' \leq m$. For some $p=(a,b,m+1)$, we have
\begin{align*}
\dt c_t(p)= & \frac{1}{2}\sum_{p' \preceq p} p'.(p \bsl p') c_t(p')c_t(p \bsl p') - c_t(p) \sum_{p' \in S} p.p' c_t(p') \\
 = & \b(t) - c_t(p) \g(t).
\end{align*}
So we may write
\[
c_t(p)=\left ( c_0(p) + \int_0^t \b(s) e^{G(s)} \ds \right )e^{-G(t)}.
\]
But $\b(t) \geq 0$ since it is a linear combination with nonnegative coefficients of the $c_t(a,b,m')$ for $a,b \geq 0$ and $1 \leq m' \leq m$. So $c_t(a,b,m+1)$ is nonnegative, what gives the result by induction.
\end{proof}

The following two lemmas are straightforward generalizations of Lemma 1 and 2 in \cite{B}. Note however that the monotone convergence used in the proofs requires that the coefficients $(c_t)$ be nonnegative.

\begin{lem}
\begin{description}
\item[(i)] If $(c_t)$ is a solution to Smulochowski's equation \eqref{syst}, then $t \mapsto \langle 1,c_t \rangle$, $t \mapsto \langle a,c_t \rangle$ and $t \mapsto \langle b,c_t \rangle$ are decreasing.
\item[(ii)] A family $(c_t)$ is a solution to \eqref{syst} if and only if it solves
\begin{equation}
\dt \langle c_t,f \rangle = \frac{1}{2} \sum_{p,p' \in S} p.p' c_t(p) c_t(p') (f(p \circ p') - f(p) -f(p'))
\label{eqcroch}
\end{equation}
for every bounded $f : S \to \R$.
\end{description}
\label{formfaible}
\end{lem}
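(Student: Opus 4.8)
The plan is to prove the weak form (ii) first and then deduce the monotonicity statements (i) from it. For the direct implication in (ii), I would start from the strong system \eqref{syst}, multiply the equation for $\dt c_t(p)$ by $f(p)$, and sum over all $p \in S$; once the interchange of the time derivative with the sum is justified, the left-hand side becomes $\dt \langle c_t, f\rangle$. The gain term is treated by the change of index $q = p'$, $r = p \bsl p'$: since $p' \preceq p$ is precisely the condition that $p = q \circ r$ with $q,r \in S$, and since $p'.(p \bsl p') = q.r$, the double sum $\sum_p f(p) \sum_{p' \preceq p}$ becomes $\sum_{q,r \in S} q.r\, c_t(q) c_t(r) f(q \circ r)$ (pairs with $q \circ r \notin S$ carry rate $q.r = 0$, so they may be freely included). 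The loss term $\sum_p f(p) c_t(p) \sum_{p'} p.p'\, c_t(p')$ equals $\sum_{p,p'} p.p'\, c_t(p) c_t(p') f(p)$, which by the symmetry $p.p' = p'.p$ I would rewrite as $\tfrac12 \sum_{p,p'} p.p'\, c_t(p) c_t(p')(f(p) + f(p'))$. Combining the two pieces yields \eqref{eqcroch}. All these manipulations rest on the bound $\sum_{p,p'} p.p'\, c_t(p) c_t(p') = 2 \langle a, c_t\rangle \langle b, c_t\rangle$, finite by the first integrability condition in the definition of a solution (recall $c_t \ge 0$ by Lemma \ref{possol}, so $|c_t| = c_t$); for bounded $f$ every sum above is then dominated by a constant times $\|f\|_\infty \langle a, c_t\rangle \langle b, c_t\rangle$.

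For the converse in (ii), I would apply \eqref{eqcroch} to the bounded test function $f = \mathbf{1}_{\{p_0\}}$ for a fixed $p_0 \in S$. The term $f(q \circ r)$ selects exactly the pairs with $q \circ r = p_0$, i.e. $q \preceq p_0$ and $r = p_0 \bsl q$, reproducing the gain term of \eqref{syst}; by symmetry the terms $f(q)$ and $f(r)$ each contribute half of the loss term. Hence validity of \eqref{eqcroch} for all bounded $f$ is equivalent to \eqref{syst}.

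For (i), the constant function $f = 1$ is bounded and gives $f(q \circ r) - f(q) - f(r) = -1$, so \eqref{eqcroch} immediately yields $\dt \langle 1, c_t\rangle = -\tfrac12 \sum_{p,p'} p.p'\, c_t(p) c_t(p') \le 0$. The functions $a$ and $b$ are unbounded, so I would instead apply \eqref{eqcroch} to the truncations $f_N = a \wedge N$. The elementary inequality $\min(a + a' - 1, N) \le \min(a, N) + \min(a', N)$, valid for all integers $a, a' \ge 0$ and $N \ge 1$, gives $f_N(q \circ r) - f_N(q) - f_N(r) \le 0$ on every pair of positive rate; together with $c_t \ge 0$ this shows that each $t \mapsto \langle f_N, c_t\rangle$ is decreasing. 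Since $f_N \uparrow a$, monotone convergence yields $\langle f_N, c_t\rangle \uparrow \langle a, c_t\rangle$, and a pointwise increasing limit of decreasing functions is decreasing; hence $t \mapsto \langle a, c_t\rangle$ is decreasing, and the argument for $b$ is identical.

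The main obstacle is analytic rather than algebraic: justifying the interchange of the time derivative with the infinite sum defining $\langle c_t, f\rangle$, and the rearrangement of the doubly-infinite sums in the derivation of \eqref{eqcroch}. Both rely on the uniform integrability furnished by the moment hypotheses $\langle a + b, c_t\rangle < +\infty$ (and $\langle a^2 + b^2, c_t\rangle < +\infty$ near $0$), which bound the right-hand side of \eqref{syst} locally uniformly in $t$; the nonnegativity of $c_t$ from Lemma \ref{possol} is exactly what makes the monotone convergence step of (i) applicable.
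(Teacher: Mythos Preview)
Your proposal is correct and matches what the paper indicates: the paper does not write out the proof but states it is a straightforward generalization of Lemmas~1 and~2 in \cite{B}, singling out precisely the monotone-convergence step (and hence the nonnegativity from Lemma~\ref{possol}) that you use for the unbounded test functions $a$ and $b$. One point worth aligning with the paper is that, as the remark following the lemma makes explicit, identity \eqref{eqcroch} is meant in the integral (weak) sense for general bounded $f$; your discussion of the ``interchange of derivative and sum'' is then really a dominated-convergence argument at the level of the time integral, which your moment bounds justify exactly as you say.
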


\begin{rk}
\begin{itemize}
\item The derivative in this lemma has to be understood in the weak sense, i.e. the formula actually holds in the integral form. But if $f(a,b,m) \to 0$ when $(a,b,m) \to \infty$, then it is easy to check that the formula holds in the strong sense.
\item Consider in particular, the generating function of $c_t$, $g_t(x,y,z)=\langle x^a y^b z^m , c_t \rangle$. Then $g(.,.,z)$ is regular, in the sense of Definition \ref{defreg} below.
\end{itemize}
\label{rkgt}
\end{rk}

\begin{de}
We say that a function $(t,x,y) \mapsto g_t(x,y)$ defined on $[0,T) \times (0,1)^2$ is regular if
\begin{itemize}
\item $t \mapsto g_t(x,y)$ is $C^1$ and $(x,y) \mapsto \d{g_t}{t}(x,y)$ are $C^1$,
\item $(x,y) \mapsto g_t(x,y)$ is $C^2$, $t \mapsto \d{g_t}{x}(x,y)$ and $t \mapsto \d{g_t}{y}(x,y)$ are $C^1$.
\end{itemize}
\label{defreg}
\end{de}

\begin{lem}
Let $(c_t)$ be a solution to Smoluchowski's equation, and let
\[
\G_r=\inf \{t \geq 0, \langle a^2 + b^2,c_t \rangle > r \} \qquad \mathrm{and} \qquad \Ginf=\sup_{r > 0} \G_r.
\]
Consider the mean numbers of male and female arms $A_t = \langle a ,c_t\rangle$ and $B_t = \langle b ,c_t \rangle$, and assume $A_0 = B_0 =1$. Then
\begin{equation}
A_t = B_t = \frac{1}{1+t}
\label{eqatbt}
\end{equation}
for all $t \in [0,T \wedge \Ginf)$.
\label{atbt}
\end{lem}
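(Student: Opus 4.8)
The plan is to feed the weak form \eqref{eqcroch} of Lemma \ref{formfaible} the (unbounded) test functions $f=a$ and $f=b$ and read off a closed system of scalar ODEs for $A_t$ and $B_t$. Formally, taking $p=(a,b,m)$ and $p'=(a',b',m')$, the increment for $f=a$ is
\[
f(p \circ p')-f(p)-f(p')=(a+a'-1)-a-a'=-1,
\]
and likewise it equals $-1$ for $f=b$. Plugging this into \eqref{eqcroch} and using that $p.p'=a'b+ab'$ is symmetric, the weight splits into a product of first moments:
\[
\sum_{p,p' \in S} (a'b+ab')\,c_t(p)c_t(p') = \langle b,c_t\rangle\langle a,c_t\rangle + \langle a,c_t\rangle\langle b,c_t\rangle = 2A_tB_t.
\]
Hence I expect to obtain $\dt A_t = -A_tB_t$ and $\dt B_t = -A_tB_t$.

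The hard part is that $a$ and $b$ are not bounded, so \eqref{eqcroch} does not apply to them verbatim; this is where the restriction to $[0,T \wedge \Ginf)$ enters. I would run the argument with the truncations $f_N(a,b,m)=a \wedge N$ (and symmetrically $b \wedge N$), which are bounded, and pass to the limit $N \to +\infty$ in the integrated form of \eqref{eqcroch}. The left-hand side $\langle f_N,c_t\rangle$ increases to $\langle a,c_t\rangle=A_t$ by monotone convergence, which is legitimate since $c_t \geq 0$ by Lemma \ref{possol}. For fixed $p,p'$ the increment $f_N(p\circ p')-f_N(p)-f_N(p')$ equals $-1$ once $N \geq a+a'$, so it converges pointwise to $-1$. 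To exchange limit and summation in the integrand I would use that $a \wedge N$ is concave and vanishes at $0$, hence subadditive, giving the domination
\[
\bigl| f_N(p\circ p')-f_N(p)-f_N(p') \bigr| \leq f_N(p)+f_N(p') \leq a+a',
\]
so that the integrand is bounded (uniformly in $N$) by a finite combination of moments such as $\langle ab,c_s\rangle\langle a,c_s\rangle$ and $\langle a^2,c_s\rangle\langle b,c_s\rangle$. By Cauchy--Schwarz these are controlled by $\langle a^2+b^2,c_s\rangle$, which, for any $t_0 < \Ginf$, stays bounded by some $r$ on $[0,t_0]$ by the very definition of $\G_r$ and $\Ginf$. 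Dominated convergence then yields the two ODEs above on $[0,T\wedge\Ginf)$; I expect this truncation/integrability bookkeeping to be the only real obstacle.

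Finally I would solve the system. Since $\dt(A_t-B_t)=0$ and $A_0-B_0=0$, we get $A_t=B_t$ throughout, so writing $u_t:=A_t=B_t$ reduces everything to the scalar Riccati-type equation $\dt u_t = -u_t^2$ with $u_0=1$. Separating variables gives $u_t = 1/(1+t)$, which is exactly \eqref{eqatbt}.
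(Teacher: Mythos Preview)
Your argument is correct and matches the approach the paper has in mind: the paper does not spell out a proof here but states that it is a straightforward generalization of Lemmas~1 and~2 in \cite{B}, noting that monotone convergence (hence nonnegativity from Lemma~\ref{possol}) is used. Your truncation $f_N=a\wedge N$, the subadditivity bound to dominate, and the passage to the closed ODE system $\dot A_t=\dot B_t=-A_tB_t$ with solution $1/(1+t)$ is exactly that scheme carried out in detail.
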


\section{Proof of the theorem}

\subsection{Overview of the method}

In this section, we give a sketch of the proof which contains all the important ideas. The rigorous proof however requires some care, and it is given in detail afterwards. So, consider a solution $(c_t)_{t \in [0,T)}$ to Smoluchowski's equation \eqref{syst}, and
\[
g_t(x,y,z) = \langle x^a y^b z^m , c_t \rangle = \sum_{a \geq 0} \sum_{b \geq 0} \sum_{m \geq 1} c_t(a,b,m) x^a y^b z^m.
\]
Using \eqref{eqcroch} and Lemma \ref{atbt}, it is easy to see that $g_t$ solves the following PDE
\begin{equation}
\d{g_t}{t}=\d{g_t}{x} \d{g_t}{y}-\frac{1}{1+t} \left ( x \d{g_t}{x} + y \d{g_t}{y} \right ).
\end{equation}
Now, we can solve this PDE using the method of characteristics: we want to find a trajectory $(x(t),y(t))$ starting from some $(x,y) \in [0,1]^2$ such that $g_t(x(t),y(t),z)$ is easy to compute. So let
\[
(p_1(t),p_2(t))= \left ( \d{g_t}{x}(x(t),y(t),z),\d{g_t}{y}(x(t),y(t),z) \right ).
\]
An easy calculation shows that
\begin{equation}
\dot{p_1}(t) = \dd{2}{g_t}{x} \left ( \dot{x}(t) + p_2(t) - \frac{x(t)}{1+t} \right ) + \frac{\partial^2 g_t}{\partial x \partial y} \left ( \dot{y}(t)+p_1(t)-\frac{y(t)}{1+t} \right ) - \frac{p_1(t)}{1+t},
\end{equation}
and a similar formula for $\dot{p_2}$. Now, if we require
\[
\dot{x}(t) + p_2(t) - \frac{x(t)}{1+t}  = \dot{y}(t)+p_1(t)-\frac{y(t)}{1+t} = 0,
\]
then
\[
\dot{p_i}(t)=-\frac{p_i(t)}{1+t}, \qquad i=1,2.
\]
These ODE's are readily solved, with $p_1(0)=\d{g_0}{x}(x,y)$ and $p_2(0)=\d{g_0}{y}(x,y)$, and we obtain
\begin{equation}
p_i(t) = \frac{p_i(0)}{1+t}
\label{pi}
\end{equation}
and
\[
x(t)=x+(x-p_2(0))t \quad ; \quad y(t)=y+(y-p_1(0))t.
\]
Using the PDE, we now see that
\begin{equation}
\dt g_t(x(t),y(t),z) = - \frac{p_1(0)p_2(0)}{(1+t)^2},
\label{ztpoint}
\end{equation}
so by integrating 
\begin{equation}
g_t(x(t),y(t),z)=g_t(\phi_t(x,y,z),z)=g_0(x,y,z)-\frac{t}{1+t} \d{g_0}{x}(x,y,z) \d{g_0}{y}(x,y,z).
\label{zt}
\end{equation}
To obtain $g_t$, it only remains to invert $\phi_t$, for, if $\phi_t(h_t)=\mathrm{Id}$, then
\[
g_t(x,y,z) = g_0(h_t(x,y,z))-\frac{t}{1+t} \d{g_0}{x}(h_t(x,y,z)) \d{g_0}{y}(h_t(x,y,z)).
\]

We may now start a rigorous proof, which consists mainly of 3 steps: study the map $\phi_t$, then solve the PDE \eqref{PDE}, and show that the generating function of a family $(c_t)$ solves \eqref{PDE} if and only if $(c_t)$ solves Smoluchowski's equation \eqref{syst}. The conclusion is then easy to obtain.

\subsection{Inversion of the mapping}

In this section, we study the map $\phi_t$, which is useful both for solving theorically the PDE, and for obtaining explicit solutions. We will need two preliminary lemmas.

\begin{lem}
Let $\a >0$, $\b,\g \geq 0$ and $K=[0,\a] \times [0,\b] \times [0,\g]$. For $(r,s,t) \in K$, denote
\[
A(r,s,t):=
\begin{pmatrix}
r & s \\
t & r
\end{pmatrix}.
\]
Then for every $\eps >0$, there is a norm $\|.\|$ on $\R^2$ such that
\[
\max_{(r,s,t) \in K} \| A(r,s,t)\| \leq \a + \sqrt{\b \g} + \eps,
\]
where we also denote by $\|.\|$ the induced norm on the $2 \times 2$ matrices.
\label{lemnorme}
\end{lem}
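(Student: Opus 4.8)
The driving remark is that the eigenvalues of $A(r,s,t)$ are $r \pm \sqrt{st}$, so its spectral radius equals $r + \sqrt{st} \leq \a + \sqrt{\b\g}$ everywhere on $K$. Since the induced norm of a matrix is always at least its spectral radius, the bound $\a + \sqrt{\b\g}$ is essentially optimal, and the role of $\eps$ is only to absorb the gap that appears in the degenerate directions. The plan is therefore to produce, by a single diagonal change of coordinates, a Euclidean-type norm in which all the matrices $A(r,s,t)$ have operator norm controlled by $\a + \sqrt{\b\g} + \eps$ at once.

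Concretely, I would fix a parameter $\mu > 0$ (to be chosen at the very end), set $D = \mathrm{diag}(1,\mu)$, and define $\|x\| := \|D^{-1}x\|_2$, where $\|\cdot\|_2$ is the usual Euclidean norm. This is a genuine norm on $\R^2$ because $D$ is invertible, and the substitution $y = D^{-1}x$ shows in one line that the induced operator norm of any matrix $A$ equals $\|D^{-1}AD\|_2$. A direct computation gives
\[
D^{-1}A(r,s,t)D = \begin{pmatrix} r & s\mu \\ t/\mu & r \end{pmatrix},
\]
so it only remains to bound the Euclidean operator norm of this nonnegative matrix uniformly over $K$.

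For this I would invoke the interpolation inequality $\|M\|_2 \leq \sqrt{\|M\|_1 \|M\|_\infty}$, valid for any matrix, where $\|M\|_1$ and $\|M\|_\infty$ denote the maximal absolute column-sum and row-sum of $M$. The matrix above has equal diagonal entries, so its row-sums and column-sums form the same pair $\{\,r + s\mu,\ r + t/\mu\,\}$; hence $\|M\|_1 = \|M\|_\infty = r + \max(s\mu, t/\mu)$, and consequently $\|D^{-1}A(r,s,t)D\|_2 \leq \a + \max(\b\mu, \g/\mu)$ on all of $K$. When $\b,\g > 0$ I would choose $\mu = \sqrt{\g/\b}$, which balances the two terms and yields $\max(\b\mu,\g/\mu) = \sqrt{\b\g}$, giving the bound $\a + \sqrt{\b\g}$ (indeed without even needing $\eps$).

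The only genuine subtlety, and where the slack $\eps$ becomes necessary, is the degenerate case $\b\g = 0$: there the off-diagonal part is effectively nilpotent in one direction, and no fixed norm can bring the operator norm down to the spectral radius $\a$. Here I would simply send $\mu \to 0$ when $\g = 0$ (so that $\max(\b\mu,\g/\mu) = \b\mu \to 0$) or $\mu \to +\infty$ when $\b = 0$, picking $\mu$ small or large enough that the surviving term stays below $\eps$; if $\b = \g = 0$ the matrices are scalar and the bound $\a$ is immediate. I expect this degenerate bookkeeping to be the main, albeit minor, obstacle, whereas the nondegenerate case is disposed of cleanly by the balancing choice of $\mu$ combined with the interpolation inequality.
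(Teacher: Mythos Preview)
Your proof is correct, and it follows a somewhat different route from the paper's. Both arguments control the operator norm by conjugating $A(r,s,t)$ and then bounding the resulting matrix in a convenient norm, but the choices differ. The paper conjugates by the eigenvector matrix $P$ of $A(\a,\b,\g)$ (which exists only when $\b,\g>0$) and works in the $\ell^\infty$ norm; it then computes $\|P^{-1}A(r,s,t)P\|_\infty$ explicitly, which requires a small case analysis on the sign of the off-diagonal combination to reach $\a+\sqrt{\b\g}$. In the degenerate case the paper switches to a different (diagonal) conjugation. Your approach instead uses a \emph{single} diagonal conjugation $D=\mathrm{diag}(1,\mu)$ together with the $\ell^2$ norm, and replaces the explicit computation by the interpolation bound $\|M\|_2\le\sqrt{\|M\|_1\|M\|_\infty}$, exploiting the pleasant coincidence that the conjugated matrix has the same set of row sums and column sums. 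This buys you a uniform treatment of all cases (only the value of $\mu$ changes) and avoids the sign discussion; the price is that one has to know or cite the interpolation inequality, whereas the paper's argument is entirely self-contained with elementary linear algebra. Either way, the sharp constant $\a+\sqrt{\b\g}$ is achieved in the nondegenerate case, and the $\eps$ is genuinely needed only when $\b\g=0$, as you correctly identify.
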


\begin{rk}
This is a uniform version of the well-known result (see e.g. \cite{Serre}) which states that
\begin{itemize}
\item For every (square) matrix $A$ and norm $\|.\|$, one has $\|A\| \geq \rho(A)$, where $\rho(A)$ is the spectral radius of $A$,
\item For every matrix $A$ and $\eps > 0$, there is a norm $\|.\|$ such that $\|A\| \leq \rho(A)+\eps$.
\end{itemize}
Note indeed that $\a + \sqrt{\b\g}$ is the spectral radius of $A(\a,\b,\g)$.
\end{rk}

\begin{proof}
\begin{enumerate}
\item First assume that $\b$ and $\g$ are positive. We can diagonalize $A:=A(\a,\b,\g)$. If we let $a:=\a$, $b:=\sqrt{\b}$ and $c:=\sqrt{\g}$ then
\[
A=P
\begin{pmatrix}
a + bc & 0 \\
0 & a - bc
\end{pmatrix}
P^{-1},
\]
where
\[
P =
\begin{pmatrix}
b & -b \\
c & c
\end{pmatrix}
\qquad ; \qquad
P^{-1} = \frac{1}{2bc}
\begin{pmatrix}
c & b \\
-c & b
\end{pmatrix}.
\]
Now, consider the following norm: for $x \in \R^2$, let $\|x\|=\|P^{-1}x\|_{\infty}$, where $\|(x_1,x_2)\|_{\infty}=\max(|x_1|,|x_2|)$. Then for any $2 \times 2$ matrix $M$,
\[
\|M\|=\max_{x \neq 0} \frac{\|Mx\|}{\|x\|} = \max_{x \neq 0} \frac{\|P^{-1}Mx\|_{\infty}}{\|P^{-1}x\|_{\infty}}=\max_{y \neq 0} \frac{\|P^{-1}MPy\|_{\infty}}{\|y\|_{\infty}}=\|P^{-1}MP\|_{\infty}.
\]
An easy computation shows that for $(r,s,t) \in K$,
\[
P^{-1}A(r,s,t)P =
\begin{pmatrix}
r + \frac{bt}{2c} + \frac{cs}{2b} & - \frac{bt}{2c} + \frac{cs}{2b} \\
\frac{bt}{2c} - \frac{cs}{2b} & r - \frac{bt}{2c} - \frac{cs}{2b}
\end{pmatrix}.
\]
Recall that for a matrix $M$,
\[
\|M\|_{\infty}=\max_{i} \sum_j |M_{i,j}|,
\]
so that, since $r \geq 0$,
\[
\|P^{-1}A(r,s,t)P\|_{\infty}=r + \frac{bt}{2c} + \frac{cs}{2b} + \left | \frac{bt}{2c} - \frac{cs}{2b} \right | := F(r,s,t).
\]
It remains to find the maximum of $F$ on $K$. First, note that for $(r,s,t) \in K$,
\[
0 \leq F(r,s,t) \leq F(\a,s,t).
\]
Then, for every $(s,t) \in [0,\b] \times [0,\g]$, we can write $t = ps$, $p \geq 0$. If $p \leq c^2/b^2$, then $cs/(2b) \geq bt / (2c)$, so that $F(\a,s,t)=\a + cs/b$. But $s \leq b^2$, so $F(\a,s,t) \leq \a + bc = \a + \sqrt{\b\g}$. And if $p > c^2/b^2$, then $cs/(2b) \leq bt / (2c)$, so that $F(\a,s,t)=\a + bt/c$. But $t \leq c^2$, so $F(\a,s,t) \leq \a + bc = \a + \sqrt{\b\g}$. Finally, the maximum of $F$ on $K$, i.e. the maximum of $\|A(r,s,t)\|$ on $K$, is $\a + \sqrt{\b\g}$.

\item Assume now that $\b$ or $\g$ is zero, say e.g. $\g=0$. Take $\eps > 0$, and $M >0$ such that $\b/M < \eps$. Consider the norm $\|x\| = \|Px\|_{\infty}$, where $P$ is a diagonal matrix with diagonal $(1,M)$. For $(r,s,0) \in K$, we have as before
\[
\|A(r,s,0)\|=\|PA(r,s,0)P^{-1}\|_{\infty}=
\left \|
\begin{pmatrix}
r & s/M \\
0 & r 
\end{pmatrix}
\right \|_{\infty}.
\]
Since $s \leq \b$, this shows that $\|A(r,s,0)\| \leq \a + \eps$.
\end{enumerate}
\end{proof}

We will deal often with real-analytic functions in the remaining of the proofs. For the definitions and results on this topic, we refer to \cite{Krantz}. We will show the following result.

\begin{prop}
For $t \in [0,T_c)$ and $z \in [0,1]$, define $\phi_t(.,.,z) : [0,1]^2 \to \R^2$ by
\begin{equation}
\phi_t(x,y,z) = \left ( (1+t)x-t \d{g_0}{y}(x,y,z), (1+t)y-t \d{g_0}{x}(x,y,z) \right ),
\label{phit}
\end{equation}
and let $K_t(z)$ be the closed subset of $[0,1]^2$: $K_t(z) = \phi_t(.,.,z)^{-1} ([0,1]^2)$. Then
\begin{description}
\item[(i)] $\phi_t(.,.,z) : K_t(z) \to [0,1]^2$ is a homeomorphism. Denote $h_t(.,.,z)=(\htun(.,.,z),\htde(.,.,z))$ its inverse.
\item[(ii)] For $i=1,2$, $(x,y,z,t) \mapsto h^{(i)}_t(x,y,z)$ is an analytic function on $(0,1)^3 \times (0,T_c)$.
\end{description}
\label{propphit}
\end{prop}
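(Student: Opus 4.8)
The plan is to read $\phi_t(\cdot,\cdot,z)$ as a perturbation of the identity by a contraction, and then to feed the resulting diffeomorphism into the analytic implicit function theorem. Write $\phi_t(x,y,z)=(1+t)(x,y)-tF(x,y,z)$, where $F(x,y,z)=\left(\d{g_0}{y}(x,y,z),\d{g_0}{x}(x,y,z)\right)$. Its differential in $(x,y)$ is exactly the matrix $A\!\left(\ddd{g_0}{x}{y},\dd{2}{g_0}{y},\dd{2}{g_0}{x}\right)$ of Lemma \ref{lemnorme}. Since $g_0$ is a power series with nonnegative coefficients, each of these three second derivatives is nondecreasing in $x,y,z$, hence bounded on $[0,1]^3$ by its value at $(1,1,1)$, namely $\langle ab,c_0\rangle$, $\langle b^2-b,c_0\rangle$ and $\langle a^2-a,c_0\rangle$; the spectral radius of the corresponding limiting matrix is precisely $M$. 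For $t\in(0,T_c)$ one checks directly from the definition of $T_c$ that $tM<1+t$ (if $M\le 1$ this is immediate, and if $M>1$ it is $t(M-1)<1$), so I may fix $\eps>0$ with $t(M+\eps)<1+t$ and invoke Lemma \ref{lemnorme} to produce a norm $\|\cdot\|$ on $\R^2$ with $\sup_{[0,1]^3}\|DF\|\le M+\eps$. Thus $\frac{t}{1+t}F(\cdot,\cdot,z)$ is a contraction of $\left([0,1]^2,\|\cdot\|\right)$ for each $z$.

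Part (i) is then the Banach fixed point theorem. For surjectivity, fix $(X,Y)\in[0,1]^2$ and set $T(u,v)=\frac{1}{1+t}(X,Y)+\frac{t}{1+t}F(u,v,z)$; because $\d{g_0}{x},\d{g_0}{y}\in[0,1]$ on $[0,1]^3$, the map $F$ sends $[0,1]^2$ into itself, so $T$ is a contraction of the complete space $[0,1]^2$ into itself, and its unique fixed point is a preimage of $(X,Y)$ lying in $K_t(z)$. For injectivity, if $\phi_t(p_1,z)=\phi_t(p_2,z)$ then $(1+t)\|p_1-p_2\|=t\|F(p_1,z)-F(p_2,z)\|\le t(M+\eps)\|p_1-p_2\|$ by the mean value inequality on the convex set $[0,1]^2$, which forces $p_1=p_2$. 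Hence $\phi_t(\cdot,\cdot,z)$ is a continuous bijection from the compact set $K_t(z)$ (a closed subset of $[0,1]^2$) onto the Hausdorff space $[0,1]^2$, and is therefore a homeomorphism; this defines $h_t(\cdot,\cdot,z)$.

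For part (ii) I would first note that the Jacobian of $\phi_t(\cdot,\cdot,z)$ is invertible on all of $[0,1]^3$: the eigenvalues of $(1+t)I-tDF$ are $(1+t)-t\left(\ddd{g_0}{x}{y}\pm\sqrt{\dd{2}{g_0}{x}\dd{2}{g_0}{y}}\right)$, and both are strictly positive since the larger subtracted eigenvalue of $DF$ is bounded by $M<(1+t)/t$. Next I would check that $\phi_t$ carries interior to interior, i.e. that a point $(X,Y)\in(0,1)^2$ has its preimage in $(0,1)^2$: if a preimage had $u=0$ then $\phi_t^{(1)}=-t\,\d{g_0}{y}\le 0$, and if $u=1$ then $\phi_t^{(1)}=1+t\left(1-\d{g_0}{y}\right)\ge 1$ (using $\d{g_0}{y}\in[0,1]$), and symmetrically for $v$; in every case a coordinate would leave $(0,1)$, a contradiction. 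Thus $h_t$ maps $(0,1)^2$ into $(0,1)^2$. Analyticity now follows from the analytic implicit function theorem applied to $\Psi(u,v;x,y,z,t):=\phi_t(u,v,z)-(x,y)$. Since $\langle 1,c_0\rangle<+\infty$, the series $g_0$ converges at $(1,1,1)$, hence is real-analytic on the open polydisc, so $\Psi$ is jointly real-analytic on $(0,1)^3\times(0,T_c)$ (it is polynomial in $t$). At each point of this set the solution $(u,v)=h_t(x,y,z)$ lies in $(0,1)^2$ and the partial Jacobian $\partial\Psi/\partial(u,v)=(1+t)I-tDF$ is invertible, so the theorem gives that $(x,y,z,t)\mapsto h_t^{(i)}(x,y,z)$ is real-analytic near that point, and hence on all of $(0,1)^3\times(0,T_c)$.

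The main obstacle, and the only place where the hypothesis $t<T_c$ is genuinely consumed, is the contraction step: one must identify $M$ as the spectral radius of the limiting matrix $A\left(\langle ab,c_0\rangle,\langle b^2-b,c_0\rangle,\langle a^2-a,c_0\rangle\right)$ and use the \emph{uniform} (over $[0,1]^3$) norm furnished by Lemma \ref{lemnorme} to turn $\frac{t}{1+t}F$ into a genuine contraction; everything else is bookkeeping. A secondary subtlety worth isolating is the interior-to-interior claim, which is exactly what permits the analytic implicit function theorem to be run on the open cube $(0,1)^3$ rather than merely near a single interior preimage.
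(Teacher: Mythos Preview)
Your proof is correct and follows essentially the same route as the paper: both part~(i) and part~(ii) rest on exactly the same two ingredients, namely the Banach fixed point theorem applied to $T(u,v)=\frac{1}{1+t}(X,Y)+\frac{t}{1+t}F(u,v,z)$ using the tailored norm from Lemma~\ref{lemnorme}, and the analytic inverse/implicit function theorem (Theorem~2.5.3 in \cite{Krantz}) once the Jacobian is known to be invertible. Your write-up is in fact slightly more detailed than the paper's in two places the paper leaves implicit: you compute the eigenvalues of $(1+t)I-tDF$ explicitly to verify invertibility, and you spell out the interior-to-interior claim showing $h_t((0,1)^2)\subset(0,1)^2$, which is indeed needed to run the analytic implicit function theorem on the open cube.
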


\begin{proof}
\begin{description}
\item[(i)] Fix some $z \in [0,1]$ and some $t \in (0,T_c)$, and keep the notations of the statement. For notational simplicity, we omit the parameter $z$.
Let $0 \leq t < T_c$. We first want to show that $\phi_t : K_t \to [0,1]^2$ is one-to-one and onto. Fix $(u,v) \in [0,1]^2$ and let us check that there is a unique couple $(x,y) \in [0,1]^2$ such that $\phi_t(x,y)=(u,v)$. This requirement is equivalent to finding a unique fixed point to
\[
F_t(x,y)=\left ( \frac{u}{1+t} + \frac{t}{1+t} \d{g_0}{y}(x,y,z), \frac{v}{1+t} + \frac{t}{1+t} \d{g_0}{x}(x,y,z)\right ).
\]
Because of the remark above, $F_t$ is a mapping from $[0,1]^2$ to $[0,1]^2$. It remains to check that it is contracting. Its differential is
\begin{equation}
DF_t(x,y) =  \frac{t}{1+t}
\begin{pmatrix}
\ddd{g_0}{x}{y} & \dd{2}{g_0}{y} \\
\dd{2}{g_0}{x} & \ddd{g_0}{x}{y}
\end{pmatrix}
:= \frac{t}{1+t}
\begin{pmatrix}
\a(x,y,z) & \b(x,y,z) \\
\g(x,y,z) & \a(x,y,z)
\end{pmatrix}.
\label{dft}
\end{equation}
Let $\a=\a(1,1,1)=\langle ab ,c_0 \rangle$, $\b=\b(1,1,1)=\langle b^2 ,c_0 \rangle$ and $\g=\g(1,1,1)=\langle a^2 ,c_0 \rangle$. Since $t<T_c$, then $\dfrac{t}{1+t} (\a+\sqrt{\b\g}+ \eps) < 1$ for some small enough $\eps > 0$. Hence, by Lemma \ref{lemnorme}, there is a norm $\|.\|$ such that
\[
\max_{(x,y)\in[0,1]^2} \|DF_t(x,y)\| \leq \frac{t}{1+t} (\a + \sqrt{\b\g} + \eps) < 1,
\]
so that $F_t$ is contracting. Hence it has a unique fixed point. As a consequence, there is a unique couple $(x,y) \in [0,1]^2$ such that $\phi_t(x,y)=(u,v)$. Moreover, since $F_t$ is continuous with respect to $(u,v)$ and uniformly contracting in $(u,v)$, then the mapping $(u,v) \mapsto (x,y)$ is continuous, that is $h_t : [0,1]^2 \to K_t$ is a homeomorphism.

\item[(ii)] For $t_0 \in (0,T_c)$, $z_0 \in (0,1)$ and $(x_0,y_0) \in U_{t_0}$, the matrix $D \phi_{t_0}(x_0,y_0,z_0)$ is invertible. Then Theorem 2.5.3 in \cite{Krantz} shows that the inverse mapping of $\phi_t$ has real-analytic coeeficients, i.e. $h_t^{(i)}$ are real-analytic functions on $(0,1)^3 \times (0,T_c)$.
\end{description}
\end{proof}

\subsection{Study of the PDE}

The following (non-quasilinear) PDE is a central feature of our discussion
\begin{equation}
\d{g_t}{t}=\d{g_t}{x} \d{g_t}{y}-\frac{1}{1+t} \left ( x \d{g_t}{x} + y \d{g_t}{y} \right ).
\label{PDE}
\end{equation}

\noindent A preliminary result to the proof is the following. Its proof is exactly the same as the one of Lemma \ref{possol}.

\begin{lem}
Let $(c_t)_{t \in [0,T)}$ be a solution to the system
\begin{equation}
\dt c_t(p) = \frac{1}{2} \sum_{p' \preceq p} p'.(p \bsl p') c_t(p') c_t(p \bsl p') - \frac{a+b}{1+t} c_t(p)
\label{systbis}
\end{equation}
for $p=(a,b,m) \in S$, with nonnegative initial conditions. Then for all $t \in [0,T)$ and $p \in S$, $c_t(p) \geq 0$.
\label{possolbis}
\end{lem}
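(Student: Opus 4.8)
The plan is to argue exactly as in the proof of Lemma \ref{possol}, by induction on the mass $m$, exploiting the fact that in \eqref{systbis} the decay coefficient $(a+b)/(1+t)$ is an explicit nonnegative function of $t$ alone --- in particular it does not depend on the other concentrations --- so that for each fixed type $p=(a,b,m)$ the equation is a linear scalar ODE whose source term involves only particles of strictly smaller mass. Concretely, for fixed $(a,b)$ I set $G(t)=\int_0^t \frac{a+b}{1+s}\ds=(a+b)\log(1+t)$, the integrating factor associated with the decay part.

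For the base case $m=1$, take $p=(a,b,1)$. The constraint $p'\preceq p$ requires $m'\leq m-1=0$, which is impossible for $m'\in\N^*$, so the creation sum is empty and \eqref{systbis} reduces to the homogeneous equation $\dt c_t(a,b,1)=-\frac{a+b}{1+t}c_t(a,b,1)$. Hence $c_t(a,b,1)=c_0(a,b,1)e^{-G(t)}=c_0(a,b,1)(1+t)^{-(a+b)}\geq 0$.

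For the inductive step, assume $c_t(a',b',m')\geq 0$ for all $(a',b')$ and all $1\leq m'\leq m$, and consider $p=(a,b,m+1)$. Writing the creation term as $\b(t)=\frac12\sum_{p'\preceq p}p'.(p\bsl p')\,c_t(p')c_t(p\bsl p')$, equation \eqref{systbis} becomes $\dt c_t(p)=\b(t)-\frac{a+b}{1+t}c_t(p)$, whose variation-of-constants solution is
\[
c_t(p)=\left ( c_0(p)+\int_0^t \b(s)e^{G(s)}\ds \right )e^{-G(t)}.
\]
It remains to check $\b(s)\geq 0$: each coefficient $p'.(p\bsl p')$ is nonnegative, and the key point is that $p'\preceq p=(a,b,m+1)$ forces $p'=(a',b',m')$ with $m'\leq m$ and $p\bsl p'=(a+1-a',b+1-b',m+1-m')$ with mass $m+1-m'\leq m$ (since $m'\geq 1$); thus both factors $c_t(p')$ and $c_t(p\bsl p')$ are nonnegative by the induction hypothesis. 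Therefore $\b(s)\geq 0$, and since $c_0(p)\geq 0$ the displayed formula yields $c_t(p)\geq 0$, which closes the induction.

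I expect no genuine obstacle here: the only thing to verify with care is the triangular structure in the mass variable, namely that $p'\preceq p$ makes both $p'$ and $p\bsl p'$ strictly lighter than $p$, so that the induction closes. The argument is in fact marginally simpler than for Lemma \ref{possol}, since the decay coefficient $(a+b)/(1+t)$ is deterministic and the integrating factor $e^{-G(t)}=(1+t)^{-(a+b)}$ is fully explicit, so no control on the series defining the original disappearance rate $\g(t)$ is needed.
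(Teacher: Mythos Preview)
Your proof is correct and is exactly the approach indicated by the paper, which simply says ``Its proof is exactly the same as the one of Lemma \ref{possol}.'' You have faithfully reproduced that induction-on-mass argument, with the minor simplification you note: the decay coefficient $(a+b)/(1+t)$ is explicit rather than the series $\gamma(t)$, so the integrating factor is $(1+t)^{-(a+b)}$ and no integrability issue arises.
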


\begin{prop}
\begin{description}
\item[(i)] For every $z \in [0,1]$, the PDE \eqref{PDE} with initial conditions $g_0=g_0(.,.,z)$ has a unique regular (in the sense of Definition \ref{defreg}) solution $(t,x,y) \mapsto g_t(x,y,z)$ defined on $[0,T_c)\times (0,1)^2$.
\item[(ii)] The solution of the PDE is given by
\begin{equation}
g_t(x,y,z) = g_0(h_t(x,y,z),z)-\frac{t}{1+t} \d{g_0}{x}(h_t(x,y,z),z)\d{g_0}{y}(h_t(x,y,z),z),
\label{gt}
\end{equation}
where $h_t$ is defined in Proposition \ref{propphit}.
\item[(iii)] We have the alternative expression
\begin{equation}
g_t(x,y,z) = \frac{1}{1+t} \left ( \Httde(x,y,z)+ \Httun(0,y,z) \right ) + G_t(z)
\label{gtxyz2}
\end{equation}
in the notations of Theorem \ref{grosthm}.
\item[(iv)] For every $t \in [0,T_c)$, $g_t$ has an analytic expansion
\begin{equation}
g_t(x,y,z)=\sum c_t(a,b,m) x^a y^b z^m
\label{expanagt}
\end{equation}
for $(x,y,z) \in [0,1)^3$, where $c_t(a,b,m) \geq 0$.
\end{description}
\label{propPDE}
\end{prop}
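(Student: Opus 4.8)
The plan is to prove (ii) first, producing an explicit candidate, and then deduce the other three assertions from it. I would define $g_t$ by formula \eqref{gt} using the inverse $h_t$ from Proposition \ref{propphit}. Since $h_t$ is analytic on $(0,1)^3\times(0,T_c)$ and $g_0$ together with its first derivatives is analytic on $(0,1)^3$ (it is a power series with nonnegative coefficients), the composition is analytic, hence regular in the sense of Definition \ref{defreg}; at $t=0$ one has $\phi_0=\mathrm{Id}$, so $h_0=\mathrm{Id}$ and the formula reduces to $g_0$. To check that it solves \eqref{PDE}, I would differentiate the defining relations $(1+t)h_t^{(1)}-t\,\d{g_0}{y}(h_t)=x$ and $(1+t)h_t^{(2)}-t\,\d{g_0}{x}(h_t)=y$ with respect to $x$, $y$ and $t$. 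Writing $q_1=\d{g_0}{x}(h_t)$ and $q_2=\d{g_0}{y}(h_t)$, these implicit relations let me eliminate every derivative of $h_t$ and obtain the clean identities $\d{g_t}{x}=\frac{q_1}{1+t}$ and $\d{g_t}{y}=\frac{q_2}{1+t}$; then, using $x=(1+t)h_t^{(1)}-t q_2$ and $y=(1+t)h_t^{(2)}-t q_1$, a short computation shows $\d{g_t}{t}$ equals the right-hand side of \eqref{PDE}. This is routine once the implicit relations are in hand, and it proves existence in (i) and formula (ii).

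For uniqueness in (i) I would run the method of characteristics backwards on an arbitrary regular solution $g_t$. Fix $z$ and $(x_0,y_0)\in(0,1)^2$ and solve $\dot x=\frac{x}{1+t}-\d{g_t}{y}(x,y,z)$, $\dot y=\frac{y}{1+t}-\d{g_t}{x}(x,y,z)$ with $(x(0),y(0))=(x_0,y_0)$. Writing $p_1,p_2$ for the values of $\d{g_t}{x},\d{g_t}{y}$ along the trajectory, the PDE forces $\dot p_i=-p_i/(1+t)$, so $p_i(t)=p_i(0)/(1+t)$; the $(x,y)$-system then becomes linear and integrates to $(x(t),y(t))=\phi_t(x_0,y_0,z)$, which stays in $[0,1]^2$ since each coordinate is monotone in $t$ between its two endpoints (interior points by continuity). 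Along the trajectory $\frac{d}{dt}g_t(x(t),y(t),z)=-p_1(0)p_2(0)/(1+t)^2$, integrating to exactly \eqref{zt}. Since $\phi_t(\cdot,\cdot,z)$ is a bijection of $K_t(z)$ onto $[0,1]^2$, every interior target $(u,v)$ equals $\phi_t(h_t(u,v,z))$, so the value of $g_t$ at $(u,v,z)$ is forced to equal \eqref{gt}; hence the regular solution is unique.

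Assertion (iii) follows by matching gradients. The implicit relations give $\httde=\d{g_0}{x}(h_t)=q_1$ and $\httun=\d{g_0}{y}(h_t)=q_2$, so by the identities above $\d{g_t}{x}=\frac{1}{1+t}\httde$ and $\d{g_t}{y}=\frac{1}{1+t}\httun$. Writing $\Phi_t$ for the right-hand side of \eqref{gtxyz2}, the definition of $\Httde$ gives $\d{\Phi_t}{x}=\frac{1}{1+t}\httde$, while its $y$-derivative equals $\frac{1}{1+t}\httun$ once one uses $\d{}{y}\httde=\d{}{x}\httun$, which is merely equality of the mixed second derivatives of the regular function $g_t$. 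Thus $g_t-\Phi_t$ depends only on $(t,z)$; evaluating at $x=y=0$ and using the vanishing conventions reduces the difference to $g_t(0,0,z)-G_t(z)$. Finally $\partial_z g_t(0,0,z)=\partial_z g_0(h_t(0,0,z),z)$ — proved by differentiating $\phi_t(h_t(0,0,z))=(0,0)$ in $z$ and cancelling exactly as before — and both sides vanish at $z=0$ because $g_0(\cdot,\cdot,0)=0$, so $g_t(0,0,z)=G_t(z)$ and \eqref{gtxyz2} holds.

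For the analytic expansion (iv) I would first observe that $h_t(0,0,z)=\left(\frac{t}{1+t}q_2,\frac{t}{1+t}q_1\right)$ lies in the interior of $[0,1]^2$, and that $D\phi_t$ is invertible there (this invertibility is precisely the contraction estimate used in Proposition \ref{propphit}); the analytic inverse function theorem then extends $h_t$, and hence $g_t$, analytically to a neighbourhood of the origin in $(x,y,z)$. This yields a Taylor expansion $g_t=\sum c_t(a,b,m)x^ay^bz^m$ convergent near $0$; differentiating term by term and substituting into \eqref{PDE} shows that the coefficients $c_t(a,b,m)$ solve system \eqref{systbis} with initial data $c_0$. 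Because \eqref{systbis} is, for each fixed $m$, a linear ODE whose source involves only strictly smaller masses, its solution is unique given $c_0$, and Lemma \ref{possolbis} forces $c_t(a,b,m)\ge 0$. The remaining and most delicate point is to upgrade convergence from a neighbourhood of $0$ to all of $[0,1)^3$: since the coefficients are nonnegative and $g_t$ is finite and analytic on $(0,1)^3$ by the explicit formula, a Pringsheim/Abel (monotone convergence) argument shows the series converges to $g_t$ on the whole box. I expect this global-convergence step, together with the analytic extension across the boundary faces, to be the \emph{main obstacle}; by contrast the nonnegativity is clean, coming directly from \eqref{systbis} and Lemma \ref{possolbis}.
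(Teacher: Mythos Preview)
Your proposal is correct and follows essentially the same route as the paper: characteristics for uniqueness, the explicit formula \eqref{gt} via $h_t$ for existence, gradient matching for \eqref{gtxyz2}, and the passage through system \eqref{systbis} plus Lemma \ref{possolbis} and a Pringsheim/Bernstein argument for (iv). Two small points of comparison: the paper does uniqueness first and existence second (you reverse this, which is harmless), and for the nonnegativity in (iv) the paper runs a connectedness argument on the set $E=\{t:\ c_t(p)\ge 0\ \forall p\}$, whereas your observation that \eqref{systbis} has a unique solution by induction on the mass lets you invoke Lemma \ref{possolbis} directly from $t=0$ --- a modest simplification.
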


\begin{rk}
Formula \eqref{gtxyz2} will be useful to compute explicit solutions, since with it, it is enough to have the analytic expansion of $h_t$ around 0 to obtain the one of $g_t$ (whose coefficients are precisely the solution to \eqref{syst}). Note however that $G_0$ may be tedious to compute in general, but since it is a function of $z$ only, it is relevant only when we wish to compute the concentrations of particles with no arms. Nonetheless, their concentrations can be obtained thanks to the sytem \eqref{syst}, since
\begin{equation}
\dt c_t(0,0,m) = \sum_{m=1}^{m-1} c_t(1,0,m') c_t(0,1,m-m').
\label{padbra}
\end{equation}
\end{rk}

\begin{proof}
We will prove the statement in three steps. Fist we will show that a solution has to be written as in \eqref{gt}. Next that this formula does provide a solution. Proving formula \eqref{gtxyz2} is then an easy matter. In all the proof, some $z \in [0,1]$ is fixed.

\begin{enumerate}
\item Let $U_t=\phi_t(.,.,z)^{-1}((0,1)^2)$, and consider $g_t$ a regular solution of \eqref{PDE} on $[0,T) \times (0,1)^2$. Fix $t_0 \in (0,T)$ and $(x,y) \in U_{t_0}$, and let
\[
(p_1(t),p_2(t)) := \left ( \d{g_t}{x}(\phi_t(x,y,z),z) , \d{g_t}{y}(\phi_t(x,y,z),z) \right ).
\]
It is easy to see that $U_t$ decreases with $t$, so for $t \leq t_0$, this definition makes sense and we can differentiate $p_i$. The regularity assumptions on $g_t$ are just those needed to allow the use of Schwarz's theorem, and an easy computation shows that, on $[0,t_0]$, $(p_1,p_2)$ solves a linear differential system with continuous coefficients, whose solution is given by \eqref{pi}. Hence
\begin{equation}
\d{g_t}{x}(\phi_t(x,y,z),z) = \d{g_0}{x}(x,y,z)/(1+t) \quad ; \quad \d{g_t}{y}(\phi_t(x,y,z),z) = \d{g_0}{y}(x,y,z)/(1+t)
\label{dgtdxphit}
\end{equation}
for all $(x,y) \in U_{t}$. Then, it is easy to check that for all $(x,y) \in U_t$, \eqref{ztpoint} and \eqref{zt} hold. Replacing $(x,y)$ by $h_t(x,y,z)$ (recall $h_t : (0,1)^2 \to U_t$ is the right-inverse of $\phi_t$), we finally obtain \eqref{gt}. This shows that the PDE has at most one solution.

\item The existence of a solution is now straightforward. Let $g_t$ be defined as in \eqref{gt}. Because of the regularity of $h_t$ and of $g_0$, $g$ has the required regularity properties. It then suffices to show that it is actually a solution. To this end, let us first compute
\[
(p_1(t),p_2(t)):= \left ( \d{g_t}{x} (\phi_t), \d{g_t}{y} (\phi_t) \right )
\]
for some fixed $t \in [0,T_c)$ and $(x,y) \in U_t$. By differentiating $g_t(\phi_t)$ with respect to $x$ and $y$, it is easy to see that it solves a linear system, which has, before $T_c$, a unique solution, given by equation \eqref{dgtdxphit}. To conclude, we may differentiate $g_t(\phi_t)$ in two different ways: one using \eqref{gt} and \eqref{dgtdxphit}. The other with the chain rule. Compounding by $h_t$ in the obtained equality readily shows that $g_t$ solves the PDE \eqref{PDE} for $t \in [0,T_c)$, $(x,y) \in (0,1)^2$.

\item The formula \eqref{gtxyz} is easy to obtain, by differentiating $g_0(h_t(x,y,z),z)$ with respect to $x$, $y$ and $z$, and using the fact that
\[
\d{g_0}{x}(h_t,z)= \httde \qquad ; \qquad \d{g_0}{y}(h_t,z)= \httun.
\]
in the notations of Theorem \ref{grosthm}.

\item To prove the last point, consider $t_0 \in [0,T_c)$. $\phi_t$ is well-defined and analytic (in $(t,x,y,z)$) in a neighbourhood of $(t_0,0,0,0)$, and $D \phi_{t_0}(0,0,0)$ is invertible. So, by theorem 2.5. in \cite{Krantz}, $h_t$ is analytic near $(t_0,0,0,0)$, hence so is $g_t=g_0(h_t)$. So we may write
\begin{equation}
g_t(x)= \sum_{(a,b,m) \in S} c_t(a,b,m) x^a y^b z^m
\label{DSEgt}
\end{equation}
for $(t,x,y,z)$ in a neighbourhood of $(t_0,0,0,0)$ and infinely differentable (even analytic) $c_t$. By analytic continuation, the $(c_t)$ are uniquely defined, so we can let
\[
E= \{ t \in [0,T_c), \forall p \in S \;\; c_t(p) \geq 0\}.
\]
By continuity, $E$ is a closed set containing $0$. On the other hand, \eqref{DSEgt} holds for $(t,x,y,z)$ in a neighbourhood of $(t_0,0,0,0)$, so for $t_0 \in E$, there is a $\eps > 0$ such that \eqref{DSEgt} holds for $t \in (t_0-\eps,t_0+\eps)$ and $(x,y,z) \in (-\eps,\eps)^3$. In particular, since $g_t$ solves the PDE \eqref{PDE}, it is easy to see, using a Cauchy product and identifying the coefficients, that $c_t$ solves \eqref{systbis} for $t \in (t_0-\eps,t_0+\eps)$. So by Lemma \eqref{possolbis}, $(t_0-\eps,t_0+\eps) \subset E$. So $E$ is open and $E=[0,T_c)$. Finally, recall from Proposition \ref{propphit} that $h_t$, and so $g_t$, are analytic on $[0,1)^3$. But we have just shown that $g_t$ has an analytic expansion around $0$ with nonnegative coefficients. So (see e.g. the proof of Berstein's theorem in \cite{Krantz}), this expression actually holds on $[0,1)^3$.
\end{enumerate}
\end{proof}

\subsection{Equivalence between the system and the PDE}

Smoluchowski's equation is solved thanks to the PDE \eqref{PDE}.

\begin{prop}
\begin{description}
\item[(i)] Let $(c_t)_{t \in [0,T)}$ be a solution to Smoluchowski's equation, and let $g_t(x,y,z) := \langle c_t , x^a y^b z^m \rangle$ be its generating function. Then for all $z \in [0,1]$, $(t,x,y) \mapsto g_t(x,y,z)$ is a regular solution to the PDE \eqref{PDE} on $[0,T \wedge \Ginf) \times (0,1)^2$, with initial conditions $g_0(.,.,z)$.
\item[(ii)] Conversely, let $(c_t(p))_{p \in S}$, $t \in [0,T)$ be a family of differentiable functions. Let $g_t(x,y,z)$ be its generating function and assume it is defined for $t \in [0,T)$, $(x,y) \in (0,1)^2$ and $z \in [0,1]$. Assume that for every $z \in [0,1]$, $g_t(.,.,z)$ is a regular solution to the PDE \eqref{PDE} with initial conditions $g_0(.,.,z)$. Then
\begin{itemize}
\item For all $p \in S$ and $t \in [0,T)$, $c_t(p) \geq 0$,
\item $(c_t)$ is a solution to Smoluchowski's equation for $t \in [T \wedge T_c)$, with initial conditions $c_0$.
\end{itemize}
\end{description}
\label{propeqsystPDE}
\end{prop}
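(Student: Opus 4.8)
The plan is to move between the system and the PDE through the single test function $f(a,b,m)=x^ay^bz^m$, reading the weak form \eqref{eqcroch} as an identity of analytic functions for the forward direction and matching power-series coefficients for the converse. For (i), I would start from a solution $(c_t)$ of \eqref{syst}, nonnegative by Lemma \ref{possol}, and feed this bounded test function (with $(x,y)\in(0,1)^2$, $z\in[0,1]$) into the weak form \eqref{eqcroch} of Lemma \ref{formfaible}. Since $f(p\circ p')=x^{a+a'-1}y^{b+b'-1}z^{m+m'}$ and $p.p'=a'b+ab'$, a Cauchy product combined with the symmetry $p\leftrightarrow p'$ turns the gain term $\frac12\sum p.p'\,c_t(p)c_t(p')f(p\circ p')$ into exactly $\d{g_t}{x}\d{g_t}{y}$, and the loss term $-\frac12\sum p.p'\,c_t(p)c_t(p')(f(p)+f(p'))$ into $-\big(B_t\,x\d{g_t}{x}+A_t\,y\d{g_t}{y}\big)$, where $A_t=\langle a,c_t\rangle$ and $B_t=\langle b,c_t\rangle$. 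By Lemma \ref{atbt} these both equal $1/(1+t)$ on $[0,T\wedge\Ginf)$ — which is precisely why the conclusion is stated on that interval — so the loss term becomes $-\frac{1}{1+t}(x\d{g_t}{x}+y\d{g_t}{y})$. The left side $\dt\langle c_t,f\rangle$ is $\d{g_t}{t}$, so $g_t$ solves \eqref{PDE}; the regularity of Definition \ref{defreg} is the content of Remark \ref{rkgt}, and since $x,y<1$ each piece is absolutely convergent (dominated by $A_tB_t$), so every rearrangement and term-by-term differentiation is legitimate. That $g_0$ is the generating function of $c_0$ is immediate.

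For (ii), the nonnegativity of the $c_t(p)$ and the fact that they satisfy the auxiliary system \eqref{systbis} come essentially for free. By the uniqueness in Proposition \ref{propPDE}(i), the given $g_t$ coincides with the solution constructed in Proposition \ref{propPDE}, whose analytic expansion has nonnegative coefficients, and uniqueness of power-series coefficients identifies those with the prescribed $c_t(p)$. Matching the coefficient of $x^ay^bz^m$ in \eqref{PDE} via a Cauchy product then yields \eqref{systbis}, the only point to check being that the coefficient of $x^ay^bz^m$ in $\d{g_t}{x}\d{g_t}{y}$ is $\sum_{p'\preceq p}a'(b+1-b')c_t(p')c_t(p\bsl p')$, which equals $\frac12\sum_{p'\preceq p}p'.(p\bsl p')c_t(p')c_t(p\bsl p')$ after symmetrizing in $p'\leftrightarrow p\bsl p'$.

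It remains to upgrade \eqref{systbis} to \eqref{syst}, i.e.\ to show $\sum_{p'}p.p'\,c_t(p')=(a+b)/(1+t)$, which reduces to $A_t=B_t=1/(1+t)$. Here I would evaluate \eqref{dgtdxphit} at the boundary point $(1,1,1)$: since $\d{g_0}{x}(1,1,1)=\d{g_0}{y}(1,1,1)=1$ one has $\phi_t(1,1,1)=(1,1)$, hence $h_t(1,1,1)=(1,1)$ and $\d{g_t}{x}(1,1,1)=\d{g_0}{x}(1,1,1)/(1+t)=1/(1+t)$, and likewise in $y$. As all $c_t(p)\geq0$, monotone convergence (Abel's theorem) identifies these boundary values with $A_t$ and $B_t$, so the loss terms of \eqref{systbis} and \eqref{syst} agree and $(c_t)$ solves \eqref{syst} on $[0,T\wedge T_c)$. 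Finally I would verify the three clauses of the definition of a solution: $\langle a+b,c_t\rangle=2/(1+t)<+\infty$ for every $t$, while $\langle a^2-a,c_t\rangle=\dd{2}{g_t}{x}(1,1,1)$ and its $y$-analogue are finite whenever $h_t$ is $C^2$ near $(1,1,1)$, i.e.\ whenever $D\phi_t(1,1,1)$ is invertible, which holds for $t<T_c$; in particular condition 2 holds near $0$.

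The main obstacle is not any individual identity but the uniform control needed to differentiate the defining series of $g_t$ term by term in $t,x,y$ and to rearrange the double sums and Cauchy products. This is exactly where the nonnegativity of the $c_t(p)$ (for the monotone-convergence arguments), the regularity of Definition \ref{defreg} (to invoke Schwarz's theorem), and the invertibility of $D\phi_t(1,1,1)$ for $t<T_c$ become indispensable; correspondingly, the extraction of $A_t=B_t=1/(1+t)$ at the corner $(1,1,1)$ must be carried out by a limiting argument rather than by direct substitution into \eqref{dgtdxphit}.
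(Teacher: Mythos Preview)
Your proposal is correct and follows essentially the same route as the paper: part (i) plugs $f=x^ay^bz^m$ into the weak form \eqref{eqcroch} and invokes Lemma~\ref{atbt}, and part (ii) extracts \eqref{systbis} by matching Cauchy-product coefficients, then recovers $A_t=B_t=1/(1+t)$ from \eqref{dgtdxphit} via a monotone-convergence limit at the corner $(1,1)$. One small reordering: the nonnegativity claim is on all of $[0,T)$, not just $[0,T\wedge T_c)$, so it should be deduced from \eqref{systbis} together with Lemma~\ref{possolbis} (as the paper does) rather than from the identification with the solution of Proposition~\ref{propPDE}, which a priori only matches on $[0,T\wedge T_c)$.
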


\begin{rk}
An important feature of this result is that the PDE \eqref{PDE} and the system \eqref{syst} are equivalent only before the critical time ($T_c$ or $\Ginf$). This fact is crucial when we study the microscopic model. We indeed obtain a family of coefficients whose generating function solves the PDE (on $[0,+\infty)$), but we cannot ensure that they solve Smoluchowski's equation after $T_c$ (actually, we believe that they do not).
\end{rk}

\begin{proof}[Proof of Proposition \ref{propeqsystPDE}]
\begin{description}
\item[(i)] First note that $g$ is regular according to Remark \ref{rkgt}. If one takes $f(a,b,m) = x^a y^b z^m$ in \eqref{eqcroch}, for some fixed $(x,y,z) \in (0,1)^2 \times [0,1]$, then one gets
\[
\dt g_t(x,y,z) = \d{g_t}{x} \d{g_t}{y} - A_t y \d{g_t}{y} - B_t x \d{g_t}{x}.
\]
Recall from Lemma \ref{atbt} that when $t < \Ginf$, $A_t=B_t=1/(1+t)$. Replacing in the equation above shows that $g_t$ solves \eqref{PDE} for $(x,y) \in (0,1)^2$ and $0 \leq t < T \wedge \Ginf$.
\item[(ii)] As in the fourth part of the proof of Proposition \ref{propPDE}, we see that the $(c_t(p))$ solve \eqref{systbis}, and hence that they are nonnegative. By uniqueness of a solution to the PDE \eqref{PDE}, for $t \in [0,T \wedge T_c)$, these coefficients are those obtained in \eqref{expanagt}. Now, let $t < T_c \wedge T$, $U_t=\phi_t^{-1}(.,.,1)((0,1)^2)$, $K_t=K_t(1)=\phi_t^{-1}(.,.,1)([0,1]^2)$, and recall from \eqref{dgtdxphit} that since $g$ is a regular solution to \eqref{PDE}, then for all $(x,y) \in U_t$
\[
\d{g_t}{x} (\phi_t(x,y,1),1)=\d{g_0}{x}(x,y,1) \frac{1}{1+t},
\]
what we can write
\begin{equation}
\sum_{(a,b,m) \in S} a c_t(a,b,m) \phi_t^{(1)}(x,y,1)^a \phi_t^{(2)}(x,y,1)^b = \sum_{(a,b,m) \in S} a c_0(a,b,m) x^a y^b \times \frac{1}{1+t}.
\label{dgtdx11}
\end{equation}
Note now that since $t < T_c$, then $\phi_t(.,.,1) : K_t \to [0,1]^2$ is a homeomorphism, so $\overline{U_t}=K_t$. Since $(1,1) \in K_t$, we can pass to the limit in the equality above when $(x,y) \to (1,1)$. Using monotone convergence and the continuity of $\phi_t$, we obtain
\[
\sum_{(a,b,m) \in S} a c_t(a,b,m) = \sum_{(a,b,m) \in S} a c_0(a,b,m) \times \frac{1}{1+t} = \frac{1}{1+t}.
\]
The same reasoning shows that $\langle b,c_t \rangle = 1/(1+t)$ for $t<T_c$. Hence, we may re-write \eqref{systbis} before $T_c$ by substituting
\[
\frac{a}{1+t} = a \sum_{(a',b',m') \in S} b' c_t(a',b',m') \qquad ; \qquad \frac{b}{1+t} = b \sum_{(a',b',m') \in S} a' c_t(a',b',m'),
\]
which shows that $(c_t)$ solves Smoluchowski's equation \eqref{syst} before $T_c$.
\end{description}
\end{proof}

\subsection{Existence and uniqueness of a solution}

With these results, proving Theorem \ref{grosthm} is now an easy matter.

\begin{proof}[Proof of theorem \ref{grosthm}]
\begin{enumerate}
\item Let us first prove that $\langle a^2 + b^2, c_t \rangle$ is finite before $T_c$ and tends to $+ \infty$ when $t \to T_c$. So take $(c_t(a,b,m))_{t \in [0,T)}$ a solution to the system \eqref{syst}, and $g_t$ its generating function. Since $\langle a^2 + b^2 ,c_t \rangle < + \infty$ in a neighbourhood of 0, then we have
\[
\dd{2}{g_t}{x}(1,1,1)=\langle c_t,a^2-a \rangle,
\]
as long as $\langle a^2 + b^2,c_t \rangle < + \infty$. Note that by Lemma \ref{atbt} $\langle c_t,a \rangle$ is bounded by 1, so $\langle c_t,a^2 \rangle$ explodes if and only if $\dd{2}{g_t}{x}(1,1,1)$ explodes. Let us compute the latter. Differentiating \eqref{dgtdxphit} with respect to $x$ and $y$ and having $(x,y)$ tend to $(1,1)$, we obtain that
\[
\begin{pmatrix}
1+t-t\a & -t \g \\
-t \b & 1+t-t \a
\end{pmatrix}
\begin{pmatrix}
a \\
c
\end{pmatrix}
= \frac{1}{1+t}
\begin{pmatrix}
\g \\
\a
\end{pmatrix},
\]
where
\[
\a = \ddd{g_0}{x}{y}=\langle ab ,c_0 \rangle \quad ; \quad \b = \dd{2}{g_0}{y}=\langle b^2 - b ,c_0 \rangle \quad ; \quad \g = \dd{2}{g_0}{x}=\langle a^2 - a ,c_0 \rangle.
\]
and
\[
a = \dd{2}{g_t}{x}(1,1,1) \qquad \mathrm{and} \qquad b=\ddd{g_t}{x}{y}(1,1,1).
\]
Hence
\[
a = \dd{2}{g_t}{x}(1,1,1) = \sum_{(a,b,m) \in S} a(a-1) c_t(a,b,m) = \frac{\g}{(1+t-t \a)^2-t^2 \g \b}.
\]
This expression is valid as long as $t < T_c$, since the determinant of the matrix is then nonzero. In the same way, we also have
\[
\dd{2}{g_t}{y}(1,1,1) = \sum_{(a,b,m) \in S} b(b-1) c_t(a,b,m) = \frac{\b}{(1+t-t \a)^2-t^2 \g \b}.
\]
If $\g$ or $\b$ is nonzero, then $\langle c_t,a^2 + b^2 \rangle \to + \infty$ when $t \to T_c$. If $\g=\b=0$, then $\langle a^2 + b^2, c_t \rangle$ remains finite, but this condition also imposes that $M=1$, and so $T_c=+\infty$.

\item Uniqueness is now easy to obtain: assume $(c_t^{(1)})$ and $(c_t^{(2)})$ solve the system \ref{syst} on $[0,T)$, $T \leq T_c$, with initial conditions $c_0$. Let $g_t^{(1)}$ and $g_t^{(2)}$ be their generating functions. Since $\Ginf=T_c$ and $T \leq T_c$, then by Proposition \ref{propeqsystPDE}, for every $z \in [0,1]$, they are regular solutions to the PDE \eqref{PDE} on $[0,T) \times (0,1)^2$, with initial conditions $g_0(.,.,z)$. But by Proposition \ref{propPDE} there is a unique regular solution to the PDE on $[0,T_c)$, so $g_t^{(1)}=g_t^{(2)}$ on $[0,T)$, so that $(c_t^{(1)})=(c_t^{(2)})$.

\item The existence is given by Proposition \ref{propPDE}, (iv), and Proposition \ref{propeqsystPDE}, (ii).

\item Let us finally prove that the total mass is conserved. Consider $\psi_t(x,y,z)=(\phi_t(x,y,z),z)$, $U_t'=\psi_t^{-1}((0,1)^3)$ and $K_t'=\psi_t^{-1}([0,1]^3)$. For $(x,y,z) \in U'_t$, we can differentiate $g_t(\psi_t(x,y,z))$ with respect to $z$, and using \eqref{dgtdxphit}, we obtain
\[
\d{g_t}{z} (\psi_t(x,y,z))=\d{g_0}{z} (\psi_t(x,y,z)).
\]
Now, $\psi_t$ is a homeomorphism from $K_t'$ to $[0,1]^3$, so $\overline{U_t'}=K_t$. But $(1,1,1) \in K'_t$, so we may pass to the limit when $(x,y,z) \to (1,1,1)$ in the equality above, to obtain
\[
\d{g_t}{z}(1,1,1)=\d{g_0}{z}(1,1,1),
\]
what precisely means $\langle m,c_t \rangle=\langle m,c_0 \rangle$.
\end{enumerate}
\end{proof}

\section{Explicit formulae}

We give in this short section some explicit solutions, without giving the full details of the computations. We will always assume that at time 0, there are only particles of size 1 in the medium. So given a (finite) measure $\mu$ on $\N \times \N$, we assume
\[
c_0(a,b,m) = \mu(a,b) 1_{\{m=1\}}
\]
and as usual $\langle a,c_0 \rangle = \langle b,c_0 \rangle = 1$. To obtain the solutions, we need to invert $\phi_t$, what can be done using the (two-variable) Lagrange inversion formula (a statement is given by Good \cite{Good}). But it is much more involved than the one-dimensional formula, and the expressions it would provide can hardly be called explicit. Let us however study three easy cases. Only the last one requires the two-variable formula.

\subsection{Particles with one female arm}

The first case is when each particle has exactly one female arm, and a number of male arms distributed according to a measure $\mu_1$. So
\[
\mu(a,b) =
\left \{
\begin{array}{ll}
0  & \qquad \mathrm{if} \qquad b \neq 1 \\
\mu_1(a) & \qquad \mathrm{if} \qquad b = 1
\end{array}
\right.
\]
and we will assume that $A_0=B_0=1$, i.e. $\mu_1$ is a probability measure with unit mean. In this case, we obtain, for every $a,b \geq 0$ and $m \geq 1$
\[
c_t(a,b,m) =
\left \{
\begin{array}{lll}
0 & \qquad \mathrm{if} \qquad b \neq 1 \\
\dfrac{t^{m-1}}{(1+t)^{m+a}} \dfrac{1}{m} \dbinom{m+a-1}{a} \mu_1^{*m}(m+a-1) & \qquad \mathrm{if} \qquad b = 1 
\end{array}.
\right.
\]
In particular there exists only particles with one female arm, what is physically obvious. Moreover, the concentration $c_t(a,1,m)$ is exactly the concentration of particles with $a$ arms and mass $m$ obtained in the ``oriented model'' of \cite{B}, with initial distribution $\mu_1$. This is also natural, since in this case, $(a,1,m)$- and $(a',1,m')$-particles indeed coagulate with rate $a+a'$, which is the rate of the oriented model. Note also that $T_c = + \infty$, like in the oriented model.

\subsection{Arms with uniform random genders}

In this model, the total number of arms of a particle is chosen according to a measure $\mu_1$, then each arm is given a gender independently, with probability $1/2$. That is, we let
\[
\mu(a,b)=\mu_1(a+b) \binom{a+b}{b} \frac{1}{2^{a+b}}.
\]
We will assume that $\mu_1$ has mean 2, so that $A_0=B_0=1$. Let $\nu(j)=(j+1)\mu(j+1)$. Then we obtain, for $(a,b) \neq (0,0)$
\[
c_t(a,b,m) = \frac{1}{2} \frac{t^{m-1}}{(1+t)^{a+b+m-1}}\frac{(m+a+b-2)!}{m!a!b!} \left ( \frac{\nu_1}{2} \right )^{*m}(m+a+b-2)
\]
and
\[
c_t(0,0,m)=\frac{1}{m(m-1)} \frac{1}{2} \frac{1}{(1+1/t)^{m-1}}\left ( \frac{\nu_1}{2} \right )^{*m}(m-2)
\]
provided $\nu_1(0) > 0$. If $T_c=+ \infty$, this condition means that $\nu_1 \neq \delta_1$. In particular, one easily checks that
\[
\sum_{a+b=k} c_t(a,b,m) = 2 c_t^{sym}(k,m)
\]
where $c_t^{sym}(k,m)$ is the concentration of particles with $k$ arms and mass $m$ in the symmetric model of \cite{B}, with initial arm distribution $\mu_1/2$. The factor $2$ comes from the normalisation: in our model, the total concentration of arms in the medium is 2, when it is 1 in the symmetric model. It is also worth stressing the stronger fact that for $a,b \geq 0$, we have
\[
c_t(a,b,m)=\frac{1}{2} \binom{a+b}{b} \frac{1}{2^{a+b}} c_t(a+b,m).
\]
Hence, at any given time, the distribution of the number of male (or female) arms is still binomial. So, if at some time we chose to reassign to each arm a gender uniformly and independently, and let the system evolve on from this state, no difference would be observed. Or we could watch a system evolve like the symmetric model starting from an arm distribution $\mu_1/2$, and then at some time give the arms a gender uniformly at random and independently. The evolution afterwards will be the evolution of the sexed model with initial arm distribution $\mu$. Note as before that the critical time is the same than in the symmetric model with initial distribution $\mu_1/2$.

More generally, consider initial concentrations such that for all $a,b \geq 0$ $\mu(a,b)=\mu(b,a)$, and the solution $(c_t)_{t \in [0,T_c)}$ to Smoluchowski's equation \eqref{syst}. Then it is easy to check (by uniqueness) that for all $t \in [0,T_c)$ $c_t(a,b,m)=c_t(b,a,m)$, and that, if we denote
\[
k_t(l,m):= \sum_{a+b=l} c_t(a,b,m),
\]
then $(k_t)$ is governed (up to a factor $1/2$) by the symmetric Smoluchowski equation of \cite{B}. Hence, in this case too, $k_t(l,m)=2c_t^{sym}(l,m)$.

\subsection{Particles with one gender}
\label{onegender}

Let us finally consider the more intricate case where at time 0, the arms of each particle have all the same gender. This is motivated by the idea of ionic bonds: a particle with only male (resp. female) arms can be considered as a cation (resp. an anion), and cations can only bond with anions. Hence, consider, for $i=1,2$, $\mu_i$ two measures with mean 1 such that $\mu_1(0)=\mu_2(0)$, and take
\[
\mu(a,b)=
\left \{
\begin{array}{lll}
\mu_1(a) & \qquad \mathrm{if} \qquad & b=0 \\
\mu_2(b) & \qquad \mathrm{if} \qquad & a=0 \\
0 & \qquad \mathrm{else} &  \\
\end{array}
\right.
\]
and $\nu_i(j)=(j+1)\mu_i(j+1)$. The two-variable Lagrange inversion formula gives, for $(a,b) \neq (0,0)$
\[
c_t(a,b,m) = \dfrac{t^{m-1}}{(1+t)^{m+a+b-1}} \sum_{k=0}^m \dfrac{(m-k+b-1)!(k+a-1)!}{(m-k)!k!a!b!} \nu_1^{*(m-k)}(k+a-1) \nu_2^{*k}(m-k-1+b).
\]
If we also let
\[
\nu_1 \diamond \nu_2 (m) = (m-1) \sum_{k=1}^{m-1} \frac{1}{k} \nu_1^{*(m-k)}(k-1) \frac{1}{m-k} \nu_2^{*k} (m-k-1)
\]
then, for $m \geq 2$,
\[
c_t(0,0,m)=\frac{1}{m-1} \frac{1}{(1+1/t)^{m-1}} \nu_1 \diamond \nu_2(m),
\]
provided $\nu_1(0)\nu_2(0) > 0$ (which means, if $T_c = + \infty$, that $\nu_1$ and $\nu_2$ are not $\delta_1$). In particular, we see that if $T_c=+ \infty$ and $\nu_1,\nu_2 \neq \delta_1$, then for $m \geq 2$,
\[
c_t(a,b,m) \to
\left \{
\begin{array}{ll}
0 & \mathrm{if} \; (a,b) \neq (0,0) \\
\dfrac{1}{m-1} \nu_1 \diamond \nu_2(m) & \mathrm{if} \; a=b=0.
\end{array}
\right.
\]
Hence all the arms are used to coagulate. Chemically, this means that there are no more ions in the medium. The limiting distribution of the sizes is given by $(m-1)^{-1} \nu_1 \diamond \nu_2 (m)$. We will generalize this fact in the following section, and give a probabilistic interpretation of the measure $\nu_1 \diamond \nu_2$. Also, if $M_i$ is the mean of $\nu_i$, then $M = \sqrt{M_1 M_2}$ and $T=1/(M-1)$, or $+ \infty$ if $M \leq 1$. If $\mu_1=\mu_2=\mu$, then the critical time is the same as in the symmetric model with initial distribution $\mu$.

\section{Limiting concentrations and Galton-Watson processes}

\subsection{Convergence of the concentrations}

In this section, we will study the limiting concentrations. Similarly to what happens in the oriented and symmetric model of \cite{B}, we expect the concentrations to converge when the time tends to $+ \infty$, whenever gelation does not occur. Physically, this would mean that the system converges to a terminal state where all arms have been used (otherwise, further coagulations ``should'' occur). This is actually true, and this is an easy consequence of the preceeding results.
\begin{cor}
Assume $T_c = + \infty$, and let $(c_t)_{t \geq 0}$ be the solution to Smoluchowski's equation \eqref{syst}.
\begin{description}
\item[(i)] When $t \to + \infty$, there exists limiting concentrations $\cinf(m)$ such that
\[
c_t(a,b,m) \to \cinf(m) 1_{\{a=b=0\}}
\]
in $\ell^1(\N \times \N \times \N^*)$.
\item[(ii)] For $z \in [0,1)$, the generating function $\ginf(z)$ of $(\cinf(m))_{m \geq 1}$ is the antiderivative vanishing at $0$ of
\[
\d{g_0}{z} \left ( \hinfun(z), \hinfde(z),z \right ),
\]
where $(\hinfun,\hinfde)$ is characterised by
\begin{equation}
\left \{
\begin{array}{lll}
\hinfun(z) & = & \d{g_0}{y} \left ( \hinfun(z),\hinfde(z),z) \right ) \\
\hinfde(z) & = & \d{g_0}{x} \left ( \hinfun(z),\hinfde(z),z) \right ).
\end{array}
\right.
\label{eqhinf}
\end{equation}
\end{description}
\label{cinf}
\end{cor}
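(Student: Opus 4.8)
The plan is to read off both statements from the explicit description of $g_t$ and $h_t$ obtained in Propositions \ref{propphit} and \ref{propPDE}, together with the two facts $\langle a,c_t\rangle=\langle b,c_t\rangle=1/(1+t)$ (Lemma \ref{atbt}) and conservation of mass (Theorem \ref{grosthm}(iv)); the condition $T_c=+\infty$, i.e. $M\leq 1$, will be used precisely once, to turn a non-expansive map into a contraction.

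First I would prove (i). Since every $c_t(p)\geq 0$ (Lemma \ref{possol}), each particle carrying a male arm contributes at least its own concentration to $\langle a,c_t\rangle$, so
\[
\sum_{a\geq 1,\,b,\,m} c_t(a,b,m)\leq \langle a,c_t\rangle=\frac{1}{1+t},
\]
and symmetrically for female arms; hence the total concentration of particles with at least one arm is $\leq 2/(1+t)\to 0$. For the armless particles, \eqref{padbra} shows $t\mapsto c_t(0,0,m)$ is nondecreasing, while $m\,c_t(0,0,m)\leq\langle m,c_t\rangle=\langle m,c_0\rangle$ keeps it bounded, so it converges to some $\cinf(m)$. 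Applying monotone convergence to $\sum_m c_t(0,0,m)\leq\langle 1,c_t\rangle\leq\langle 1,c_0\rangle$ (the last via Lemma \ref{formfaible}) gives $\sum_m\cinf(m)<\infty$ and $\sum_m(\cinf(m)-c_t(0,0,m))\to 0$, which combined with the previous bound yields convergence in $\ell^1$.

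The heart of (ii) is to make sense of $(\hinfun,\hinfde)$ and to prove $h_t\to h_\infty$. Fix $z\in[0,1)$ and consider the self-map $\Psi_z(u,v)=\big(\d{g_0}{y}(u,v,z),\d{g_0}{x}(u,v,z)\big)$ of $[0,1]^2$, whose fixed points are exactly the solutions of \eqref{eqhinf}. Its differential is the matrix $A\big(\ddd{g_0}{x}{y},\dd{2}{g_0}{y},\dd{2}{g_0}{x}\big)$ of Lemma \ref{lemnorme} evaluated at $(u,v,z)$; because each second derivative carries a factor $z^m\leq z$, it is bounded entrywise by $z\,A(\langle ab,c_0\rangle,\langle b^2-b,c_0\rangle,\langle a^2-a,c_0\rangle)$. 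The induced norm of Lemma \ref{lemnorme} being monotone and homogeneous in these nonnegative entries, I can choose $\eps>0$ so small that $\|D\Psi_z\|\leq z(M+\eps)<1$, using $M\leq 1$ and $z<1$. Thus $\Psi_z$ is a contraction, \eqref{eqhinf} has a unique solution $(\hinfun(z),\hinfde(z))$, and the analytic implicit function theorem makes it analytic in $z\in(0,1)$.

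Now $h_t(x,y,z)$ is the fixed point of the map $F_t$ of Proposition \ref{propphit}, whose Jacobian is $\tfrac{t}{1+t}D\Psi_z$ and which depends on $(x,y)$ only through the additive terms $x/(1+t),\,y/(1+t)$; hence $F_t\to\Psi_z$ uniformly on $[0,1]^2$ while the $F_t$ are contractions with common ratio $\leq z(M+\eps)<1$. The standard stability estimate $\|h_t-h_\infty\|\leq(1-z(M+\eps))^{-1}\|F_t(h_\infty)-\Psi_z(h_\infty)\|$ then forces $h_t(x,y,z)\to(\hinfun(z),\hinfde(z))$, uniformly in and independently of $(x,y)$. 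Letting $t\to\infty$ in \eqref{gt}, using continuity of $g_0$ and its first derivatives together with the limit $g_t(x,y,z)\to\ginf(z)$ supplied by (i), I obtain
\[
\ginf(z)=g_0(\hinfun,\hinfde,z)-\d{g_0}{x}(\hinfun,\hinfde,z)\,\d{g_0}{y}(\hinfun,\hinfde,z)=g_0(\hinfun,\hinfde,z)-\hinfun\hinfde,
\]
the last equality by \eqref{eqhinf}. Differentiating in $z$ and cancelling the two symmetric cross-terms via \eqref{eqhinf} leaves $\ginf'(z)=\d{g_0}{z}(\hinfun(z),\hinfde(z),z)$, while $\ginf(0)=0$ because $g_0(\cdot,\cdot,0)\equiv 0$ and $\hinfun(0)=\hinfde(0)=0$; this is exactly the asserted characterization. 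I expect the contraction estimate of the third paragraph to be the only delicate point: at the critical value $M=1$ the limiting map $\Psi_1$ is merely non-expansive, so it is essential to exploit the strict gain $z^m\leq z<1$ for $z<1$, via the sharp homogeneous norm of Lemma \ref{lemnorme}, to recover a genuine contraction on the whole square; the remaining $\ell^1$ bookkeeping, fixed-point stability, and final differentiation are routine once this is in place.
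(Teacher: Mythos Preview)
Your proof is correct, and for part (i) it follows essentially the same path as the paper (armed particles vanish because $\langle a+b,c_t\rangle\to 0$; armless concentrations converge monotonically via \eqref{padbra}; then $\ell^1$-convergence). Your use of monotone convergence is in fact slightly cleaner than the paper's appeal to Cauchy--Schwarz.

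For part (ii) your route differs from the paper's in two respects, both of which streamline the argument. First, the paper does not use a quantitative fixed-point stability estimate; instead it restricts to $(x,y)=(0,0)$, shows the family $(k_t^{(1)},k_t^{(2)})$ is equi-Lipschitz on $[0,Z]$, extracts convergent subsequences by Ascoli, and identifies every limit as the unique solution of \eqref{eqhinf}. Your direct contraction bound $\|h_t-h_\infty\|\leq(1-z(M+\eps))^{-1}\|F_t(h_\infty)-\Psi_z(h_\infty)\|$ bypasses the compactness argument entirely. Second, the paper reads off $g_\infty$ from formula \eqref{gtxyz} as $\lim_{t\to\infty}G_t(z)$ (the $\tilde H_t$-terms being $O(1/(1+t))$), and then passes to the limit inside the antiderivative defining $G_t$; you instead pass to the limit in the more primitive formula \eqref{gt}, obtain the closed expression $g_\infty=g_0(h_\infty,z)-h_\infty^{(1)}h_\infty^{(2)}$, and differentiate, letting \eqref{eqhinf} kill the cross-terms. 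Your approach avoids the bookkeeping with $G_t$, at the cost of invoking analyticity of $h_\infty$ to justify the differentiation.

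One point to make fully rigorous: the monotonicity and positive homogeneity of the norm in Lemma \ref{lemnorme} that you appeal to are not stated in the lemma itself but are properties of the specific norm $\|A(r,s,t)\|=r+\max(bt/c,cs/b)$ (resp.\ $r+s/M$) constructed in its proof; you should say so explicitly, since a black-box use of the lemma's statement would only give $\|D\Psi_z\|\leq M+\eps$ without the crucial factor $z$.
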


\begin{proof}
\begin{description}
\item[(i)] Since $\Ginf=T_c=+\infty$, then \eqref{eqatbt} holds for all $t \geq 0$, so
\[
\sum_{(a,b) \neq (0,0)} c_t(a,b,m) \underset{t \to +\infty}{\longrightarrow} 0.
\]
Then, using \eqref{padbra}, we get for all $t \geq 0$
\[
c_t(0,0,m)=c_0(0,0,m) + \sum_{m'=1}^{m-1} \int_0^t c_s(1,0,m')c_s(0,1,m-m') \ds.
\]
But the integrand is bounded by $A_s B_s = 1/(1+s)^2$. Hence the integral has a finite limit when $t \to + \infty$, and so does $c_t(0,0,m)$. Finally, $\langle m,c_t \rangle$ is bounded by Theorem \ref{grosthm}, and $\langle m, \cinf \rangle < + \infty$ by Fatou's lemma, so the Cauchy-Schwarz inequality shows that
\[
\sum_{m \geq 1} |c_t(0,0,m)-\cinf(m)| \underset{t \to +\infty}{\longrightarrow} 0
\]
and the result follows.
\item[(ii)] By $\ell^1$-convergence, we have
\[
\ginf(z) = \lim_{t \to +\infty} g_t(0,0,z),
\]
so, using \eqref{gtxyz} and the fact that $\Httun$ and $\Httde$ are bounded by 1,
\[
\ginf(z) = \lim_{t \to + \infty} G_t(z).
\]
It just remains to check that $\htun(0,0,z)$ and $\htde(0,0,z)$ do have a limit when $t \to + \infty$. From their definition \eqref{defhtt}, they have the same limit (if any) as $\ktun(z) := \httun(0,0,z)$ and $\ktde(z) := \httde(0,0,z)$. Now fix $Z \in [0,1)$, and consider $\ktun$ and $\ktde$ as (continuous) maps on $[0,Z]$. But $h_t$ is the right-inverse of $\phi_t$, so
\begin{equation}
\d{g_0}{x}(\ktun(z),\ktde(z),z)=\ktde(z) \qquad ; \qquad \d{g_0}{y}(\ktun(z),\ktde(z),z)=\ktun(z)
\label{eqkt}
\end{equation}
and, since $\langle a^2 + b^2 , c_0 \rangle < + \infty$, $\d{g_0}{x}(x,y,z)$ has a bounded differential on $[0,1]^2 \times [0,Z]$. Hence $\ktde$, and for the same reason $\ktun$, are Lipschitz-continuous on $[0,Z]$, with a constant independent of $t$. Ascoli's theorem thus shows that the families $(\ktun)$ and $(\ktde)$, $t \geq 0$ lie in a compact set (for the uniform topology on $[0,Z]$). So the family $(\ktun,\ktde)$ lies in a compact set, and passing to the limit in \eqref{eqkt} shows that any of its limit points solves \eqref{eqhinf}. But since $T_c=+ \infty$, the application
\[
(x,y) \mapsto \left ( \d{g_0}{y}(x,y,z) , \d{g_0}{x}(x,y,z) \right )
\]
is contracting for every $z \in [0,Z]$. So there is a unique solution to \eqref{eqhinf}, and $(\ktun,\ktde)$ converges to this solution.
\end{description}
\end{proof}

\subsection{Connection with two-type Galton-Watson processes}

In \cite{B}, Bertoin shows that for monodisperse initial conditions (i.e. $c_0(a,m)=\mu(a)1_{ \{m=0 \}}$ for some measure $\mu$) and when gelation does not occur, the limiting concentrations can be described in terms of Galton-Watson processes. The same kind of analogy is observed in our case. Precisely, consider a Galton-Watson tree with two genders, constructed as follows. We start from a male or a female ancestor. It gives birth to a number $a$ of male children, and a number $b$ of female children, where $(a,b)$ is distributed according to a law $\mu_m(a,b)$ if the ancestor is a male, $\mu_f(a,b)$ if it is a female. Then each child gives birth to a certain number of children, distributed according to $\mu_m$ or $\mu_f$, depending on his gender, and so on. Consider $T_f(\mu_m,\mu_f)$ (resp. $T_m(\mu_m,\mu_f)$) the total population of such a Galton-Watson process, starting from a female (resp male) ancestor. Let for $r \in [0,1)$
\[
g_f(r)=\E(r^{T_f}) \qquad ; \qquad g_m(r)=\E(r^{T_m})
\]
their generating functions. It is an easy exercise to check that they solve the following system, where $\phi_f$ (resp. $\phi_m$) is the generating function of $\mu_f$ (resp. $\mu_m$)
\begin{equation}
\left \{
\begin{array}{lll}
g_m(r) & = & r \phi_m(g_m(r),g_f(r)) \\
g_f(r) & = & r \phi_f(g_m(r),g_f(r)).
\end{array}
\right.
\label{eqgmgf}
\end{equation}
Besides, if $T(\mu_m,\mu_f)$ is the size of a Galton-Watson tree started from a male and a female ancestor (each tree growing independently), then
\begin{equation}
\P(T(\mu_m,\mu_f)=m)=[z^m]g_m(r)g_f(r),
\label{pgmgf}
\end{equation}
where $[z^m]h(z)$ is the coefficient of $z^m$ in the expansion around 0 of an analytic function $h$.

Now, let us go back to our study. Assume monodisperse initial conditions, i.e. there is a finite measure $\mu$ on $\N \times \N$ such that $c_0(a,b,m)=1_{\{m=0\}} \mu(a,b)$, and assume $T_c=+ \infty$. We will use the same notations as in the previous section. Using \eqref{eqhinf}, we obtain
\[
\ginf'(z)=\frac{1}{z} \hinfun(z) \hinfde(z),
\]
so that for $n \geq 2$
\begin{equation}
[z^n]g_{\infty}(z)=\frac{1}{n-1}[z^n]\hinfun(z)\hinfde(z).
\label{ginfhinf}
\end{equation}
Let
\begin{itemize}
\item $\nu_m(a,b)=(b+1)\mu(a,b+1)$ the probability measure with generating fuction $\phi_m:=\d{g_0}{y}$,
\item $\nu_f(a,b)=(a+1)\mu(a+1,b)$ the probability measure with generating fuction $\phi_f:=\d{g_0}{x}$
\end{itemize}
(they are probability measures because of the assumption $\langle a, c_0 \rangle = \langle b, c_0 \rangle = 1$), and consider the two-type Galton-Watson process as above with these reproduction laws. Because of \eqref{eqhinf} and \eqref{eqgmgf}, $(g_m,g_f)$ and $(\hinfun,\hinfde)$ solve the same equation, which has a unique solution by Corollary \ref{cinf}, so $\hinfun=g_m$, $\hinfde=g_f$. Hence by \eqref{pgmgf} and \eqref{ginfhinf}
\[
\P(T(\nu_m,\nu_f) = m) = [z^m]\hinfun(z) \hinfde(z) = (m-1) [z^m] g_{\infty}(z) = (m-1) \cinf(m).
\]
Finally, let us call a measure $\mu$ degenerate if $\mu = \delta_{(1,1)}$ or $\mu = \dfrac{1}{2} (\delta_{(2,0)} + \delta_{(0,2)})$, or $\mu(a,b)=0$ for $a \neq 1$, or $\mu(a,b)=0$ for $b \neq 1$. We let the reader check (using e.g. Theorem 10.1 in \cite{Har}) that under the assumptions $T_c = + \infty$, and ruling out the degenerate cases, $T_m(\nu_m,\nu_f)$ and $T_f(\mu_m,\mu_f)$ are finite a.s., and that the process is supercritical if $T_c < + \infty$.
\begin{cor}
The limiting concentrations verify for $m \geq 2$
\[
\P(T(\nu_m,\nu_f) = m) = (m-1) \cinf(m).
\]
Moreover, if $\mu$ is not degenerate, then $T(\nu_m,\nu_f) < + \infty$ a.s.
\label{corGW}
\end{cor}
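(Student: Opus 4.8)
The first identity has in effect been established in the discussion preceding the statement, so I would simply reassemble it. With $\phi_m = \d{g_0}{y}$ and $\phi_f = \d{g_0}{x}$ the generating functions of $\nu_m$ and $\nu_f$, the pair $(g_m,g_f)$ of progeny generating functions solves \eqref{eqgmgf}, which is exactly the system \eqref{eqhinf} satisfied by $(\hinfun,\hinfde)$. Since $T_c=+\infty$, Corollary \ref{cinf} guarantees that \eqref{eqhinf} has a unique solution, whence $\hinfun=g_m$ and $\hinfde=g_f$. Combining the total-progeny formula \eqref{pgmgf} with \eqref{ginfhinf} then gives, for $m \geq 2$,
\[
\P(T(\nu_m,\nu_f)=m)=[z^m]g_m(z)g_f(z)=[z^m]\hinfun(z)\hinfde(z)=(m-1)[z^m]\ginf(z)=(m-1)\cinf(m),
\]
which is the first assertion.

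For the second assertion, the plan is to reduce the a.s.\ finiteness of $T(\nu_m,\nu_f)$ to the a.s.\ extinction of the two-type Galton--Watson process with reproduction laws $\nu_m,\nu_f$, and then to read off extinction from the condition $T_c=+\infty$. First I would compute the mean offspring matrix: writing type $1$ for males and type $2$ for females, differentiation of $\phi_m=\d{g_0}{y}$ and $\phi_f=\d{g_0}{x}$ at $(1,1,1)$ produces
\[
\begin{pmatrix} \a & \b \\ \g & \a \end{pmatrix}, \qquad \a=\langle ab,c_0\rangle,\ \b=\langle b^2-b,c_0\rangle,\ \g=\langle a^2-a,c_0\rangle,
\]
whose Perron root is exactly $\a+\sqrt{\b\g}=M$. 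Since $T_c=+\infty$ means $M\leq 1$, the process is subcritical or critical.

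Then I would invoke the classical extinction theory for multitype Galton--Watson processes (Theorem 10.1 in \cite{Har}): a positively regular, non-singular process whose Perron root is at most $1$ becomes extinct almost surely, hence has finite total progeny almost surely, the finite second moments $\langle a^2+b^2,c_0\rangle<+\infty$ ensuring that the critical boundary case $M=1$ is covered. As $T(\nu_m,\nu_f)$ is the union of two independent trees grown from a male and a female ancestor, $T(\nu_m,\nu_f)<+\infty$ a.s.\ as soon as both component trees are finite. The crux, and the step I expect to cost the most care, is to verify that the non-degeneracy of $\mu$ is precisely what rules out the two obstructions to a.s.\ extinction: singularity, where each individual has a single deterministic offspring (this is what $\mu=\delta_{(1,1)}$ and $\mu=\tfrac12(\delta_{(2,0)}+\delta_{(0,2)})$ produce, as one checks directly from $\nu_m,\nu_f$), and failure of positive regularity, where one gender becomes an immortal deterministic line (this occurs when $\mu(a,b)=0$ for $a\neq 1$ or for $b\neq 1$, in which case $\g$ or $\b$ vanishes and the mean matrix is reducible). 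In each degenerate case one exhibits a type that spawns an infinite deterministic line, so $T=+\infty$; conversely, away from these cases the process is positively regular and non-singular, and the cited theorem yields $T(\nu_m,\nu_f)<+\infty$ a.s.
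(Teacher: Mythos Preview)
Your argument is exactly the paper's: the first identity is the displayed chain of equalities in the discussion preceding the corollary, and for the second part the paper, like you, simply refers the reader to Theorem~10.1 in \cite{Har} after noting that $T_c=+\infty$ puts the process in the (sub)critical regime and that the degenerate cases are the obstructions. Your explicit computation of the mean matrix and its Perron root $\a+\sqrt{\b\g}=M$ is a welcome addition that the paper leaves implicit.

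One small caveat in your case analysis: you identify failure of positive regularity with reducibility ($\b=0$ or $\g=0$), but primitivity also fails in the periodic case $\a=0$ with $\b,\g>0$ (males have only female children and vice versa), which is not on the list of degenerate $\mu$. There Harris's theorem does not apply directly; one handles it by passing to the two-step process, a single-type Galton--Watson chain with mean $\b\g=M^2\le 1$, which is non-singular precisely when $\mu$ is not (up to mass at $(0,0)$) equal to $\tfrac12(\delta_{(2,0)}+\delta_{(0,2)})$. This is a one-line fix and the paper glosses over it as well. The remark about second moments being needed in the critical case is unnecessary: the extinction criterion in \cite{Har} requires only first moments.
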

Hence, the law $\nu_1 \diamond \nu_2$ defined in Section \ref{onegender} is the law of the total population of a two-type Galton Watson process started from one male and one female ancestors, where the males give birth to females according to the law $\nu_2$, and the females give birth to males according to the law $\nu_1$. In particular, if $\nu_1=\nu_2=\nu$, then $\nu \diamond \nu$ is the distribution of the size of a Galton-Watson tree with reproduction law $\nu$ and starting from two ancestors. So we get (what is not obvious from the formula for $\diamond$), that for $m \geq 2$
\[
\nu \diamond \nu (m) = \frac{2}{m} \nu^{*m}(m-2).
\]
This corollary answers another question about gelation. By Theorem \ref{grosthm}, the total mass $\langle m , c_t \rangle$ is conserved as time passes, so gelation does not occur before $T_c$. But if $T_c = + \infty$, it may occur at infinity: some mass may be lost then. For monodisperse initial conditions, Corollary \ref{corGW} proves that this cannot happen, except in the degenerate cases. Denote indeed $C_t = \langle c_t , 1 \rangle$. Because of the $\ell^1$-convergence in Corollary \ref{cinf}, $C_t \to \sum \cinf(m)$, and Equation \eqref{eqcroch} yields $C_t=C_0-t/(1+t)$. Hence
\[
C_0 - 1= \sum_{m \geq 1} \cinf(m).
\]
So, whenever $\mu$ is not degenerated, Corollary \ref{corGW} gives
\begin{align*}
\sum_{(a,b,m) \in S} m \cinf(m) = & \sum_{(a,b,m) \in S} \left ( (m-1) \cinf(m) + \cinf(m) \right )=\P(T(\mu_m,\mu_f) < + \infty)+ \sum \cinf(m)\\
= & 1 + \sum \cinf(m) = 1+ C_0 - 1 = C_0,
\end{align*}
which is precisely the total mass at time 0. For the degenerate cases, we can get explicit expressions for the concentrations, and these show that the mass at infinity is 0.

\section{Microscopic model}

\subsection{Notations and preliminary results}

The goal of this section is to construct a sequence of random processes modeling the coagulation of particles with male and female arms. We will start with $n$ particles (and then let $n \to + \infty$). Let us first set some notations.
\begin{itemize}
\item Recall $S=\N \times \N \times \N^*$.
\item $\lb 0,n \rb=\{0,\dots,n\}$.
\item $M>0$ is a fixed real number. The number of arms and the total mass are assumed to grow at most like $Mn$ (see the definition of $E_n$).
\item For $p=(a,b,m) \in S$ and $p'=(a',b',m') \in S$, we will denote $p.p'=a'b+ab'$ the \textit{rate} of coagulation, and $p \circ p'=(a+a'-1,b+b'-1,m+m')$ the type of the particle resulting from such a coagulation.
\item The sequence of the number of $p$-particles is an element of
\[
E_n= \left \{ N \in \lb 0, n \rb^S, \sum_{(a,b,m) \in S} (a+b+m) N(a,b,m) \leq M n \right \}
\]
which is a finite set.
\item $\dfrac{1}{n}E_n$ is a subset of
\[
E= \left \{ C \in [0,1]^S, \sum_{(a,b,m) \in S} (a+b+m) C(a,b,m) \leq M \right \}.
\]
An element of $E$ represents the sequence of concentrations of $p$-particles. $E$ is a metric space endowed with the distance
\[
d(C^{(1)},C^{(2)})=\sum_{p \in S} \left | C^{(1)}(p) - C^{(2)}(p) \right |.
\]
\item We will call C-convergence the compact convergence (i.e. uniform convergence on every compact set) for functions from $\R^+$ to $E$.
\item $\DH$ is the space of c\`adl\`ag functions from $[0,+ \infty)$ to a metric space $(H,d)$, endowed with the Skorokhod distance. We will call S-convergence the convergence for Skorokhod's distance. For the basic facts about Skorokhod distance for functions with value in a (complete separable) metric space, see \cite{EK}.
\end{itemize}
It is easy to check the following result.
\begin{lem}
$(E,d)$ is a compact metric space. In particular, it is a Polish space.
\label{Ecomp}
\end{lem}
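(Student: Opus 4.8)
The plan is to establish compactness of $(E,d)$ via sequential compactness, which suffices since $(E,d)$ is a metric space; the assertion that $E$ is Polish then comes for free, as every compact metric space is complete and separable. First I would record that $d$ is a genuine finite metric on $E$: since $a+b+m \geq 1$ for every $(a,b,m) \in S$ (recall $m \geq 1$), the defining constraint forces $\sum_{p \in S} C(p) \leq M$ for all $C \in E$, whence $d(C^{(1)},C^{(2)}) \leq 2M < +\infty$.

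Given a sequence $(C^{(n)})$ in $E$, each coordinate $C^{(n)}(p)$ lies in the compact interval $[0,1]$ and $S$ is countable, so a standard diagonal extraction yields a subsequence (not relabelled) along which $C^{(n)}(p) \to C^\infty(p) \in [0,1]$ for every $p \in S$. I would first check that the pointwise limit $C^\infty$ still lies in $E$, which follows at once from Fatou's lemma for series:
\[
\sum_{p \in S} (a+b+m)\, C^\infty(p) \leq \liminf_{n} \sum_{p \in S} (a+b+m)\, C^{(n)}(p) \leq M.
\]

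The main obstacle is upgrading this coordinatewise convergence to convergence for the metric $d$, i.e.\ to $\ell^1$-convergence; pointwise convergence on a countable index set does not in general imply $\ell^1$-convergence, and this is exactly where the weight constraint defining $E$ must be used. The key is a uniform tail estimate: writing $T_R = \{(a,b,m) \in S : a+b+m > R\}$, every $C \in E$ satisfies
\[
\sum_{p \in T_R} C(p) \leq \frac{1}{R} \sum_{p \in T_R} (a+b+m)\, C(p) \leq \frac{M}{R},
\]
uniformly in $C$ (hence in particular for each $C^{(n)}$ and for $C^\infty$). Its complement $F_R = \{p \in S : a+b+m \leq R\}$ is a \emph{finite} set, so coordinatewise convergence already gives $\sum_{p \in F_R} |C^{(n)}(p) - C^\infty(p)| \to 0$.

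I would then conclude by splitting the sum defining $d(C^{(n)},C^\infty)$ over $F_R$ and $T_R$. Fix $\eps > 0$ and choose $R > 4M/\eps$, so that the tail contribution is at most $\tfrac{2M}{R} < \eps/2$ for every $n$ (using the tail bound for both $C^{(n)}$ and $C^\infty$); then pick $n$ large enough that the finite part over $F_R$ is below $\eps/2$. This yields $d(C^{(n)}, C^\infty) \to 0$, so the extracted subsequence converges in $(E,d)$. Therefore $(E,d)$ is sequentially compact, hence compact, and in particular complete and separable, that is, Polish.
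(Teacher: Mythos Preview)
Your proof is correct. The paper itself gives no proof of this lemma beyond the sentence ``It is easy to check the following result,'' so there is no approach to compare against; you have supplied exactly the standard argument (diagonal extraction for pointwise convergence, Fatou to stay in $E$, and the uniform tail bound $\sum_{a+b+m>R} C(p) \leq M/R$ coming from the weight constraint to upgrade to $\ell^1$-convergence).
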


\subsection{Model}

Let us now introduce the model. Informally, we consider a finite number $n$ of particles with integer mass, and assume that at time 0, the total mass of the system plus the total number of arms is less than $Mn$. Then, each pair formed of a $p$-particle and of a $p'$-particle may coagulate with rate $\frac{1}{2} p.p'$, independently of the other pairs, to form a $p \circ p'$-particle, that is, the time one has to wait to see them coagulate is exponential with parameter $\frac{1}{2} p.p'$. In other words, assume the system in in the state $\eta$ at a given time, that is $\eta \in E_n$ and $\eta(p)$ is the number of $p$-particles. There are $\eta(p) \eta(p')$ (or $\eta(p) (\eta(p) -1)$ if $p=p'$) pairs formed of a $p$-particle and of a $p'$-particle. Let
\[
\le(p,p') =
\left \{
\begin{array}{lll}
\frac{1}{2} p.p'\eta(p)\eta(p') & \quad \mathrm{if} \quad p \neq p' \\
\frac{1}{2} p.p\eta(p) (\eta(p) -1) & \quad \mathrm{if} \quad p=p'.
\end{array}
\right.
\]
We set independently on each couple $(p,p')$ an exponential clock with parameter $\le(p,p')$ (an exponential random variable with parameter 0 is assumed to be a.s. infinite). There is a.s. one and only clock which rings first. If it is the clock on the couple $(p,p')$, then the system jumps to the state $\eta + \Dp$ where
\[
\begin{array}{ll}
\Dp(p) = \Dp(p') = -1 & \quad \mathrm{if} \quad p \neq p' \\
\Dp(p) = -2 & \quad \mathrm{if} \quad p=p' \\
\Dp(p \circ p') = +1.
\end{array}
\]
Then restart the construction afresh from the new state. Note that only finitely many $\eta(p)$ are nonzero, so the first jump occurs after an exponential time with parameter
\[
\le = \sum_{p,p' \in S} \le(p,p') < + \infty.
\]
We will consider the Markov chain constructed according to this rule. That is, we fix for every $n \geq 1$
\begin{itemize}
\item An element $\Xn_0$ of $E_n$, which is the initial number of particles.
\item A pure-jump Markov process $\Xn$ on $E_n$, defined on some probability space $(\Omega^n,\mathcal{A}^n,\P^n)$, starting from $\Xn_0$, and with generator
\[
Gf(\eta)=\sum_{(p,p') \in S^2} ( f(\eta + \Dp) - f(\eta) ) \le(p,p')
\]
for every bounded function $f \; : \; E_n \to \R$. The construction of such a process is obvious since $E_n$ is finite.
\item The rescaled and time-changed process
\[
\Cn_t=\frac{1}{n}\Xn_{t/n}.
\]
\end{itemize}
Note that $\Cn$ is a pure-jump Markov process on $\frac{1}{n} E_n \subset E$, starting from $\Cn_0=\Xn_0/n$, and with generator
\[
\Gn f(\eta)=\sum_{(p,p') \in S^2} \left ( f \left ( \eta + \frac{1}{n} \Dp \right ) - f(\eta) \right ) \len(p,p')
\]
where
\[
\len(p,p')=\frac{1}{n} \l_{n \eta} (p,p').
\]
The law $P_n$ of the process $\Cn$ is a probability measure on $\DE$. We will prove that the sequence $(P_n)$ is tight, and that for every limit point $P$, and almost every process $(C_t)$ with law $P$, $(C_t)$ solves some system, which is Smoluchowski's equation \eqref{syst} before the critical time. Because of the uniqueness of such a solution, this will show that $(P_n)$ itself converges to the solution of Smoluchowski's equation before the critical time. The proof of tightness is analoguous to the one in \cite{Jeon}, up to some slight modifications.

\subsection{Tightness}

\begin{lem}
The sequence $(P_n)_{n \geq 0}$ is tight.
\label{Pntendue}
\end{lem}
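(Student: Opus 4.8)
The plan is to verify a standard tightness criterion for càdlàg processes with values in the compact metric space $(E,d)$, the only genuine work being a uniform control of the increments coming from the generator $\Gn$. Since $(E,d)$ is compact by Lemma \ref{Ecomp}, the compact containment condition is trivial: for any $T$ one may take the compact set to be $E$ itself, so $\Cn_t \in E$ for all $t$ almost surely. By Aldous' criterion (see \cite{EK}, and the analogous argument in \cite{Jeon}) it then remains to show that the increments are uniformly small over stopping times, i.e. that for stopping times $\tau_n \leq T$ and deterministic $u_n \to 0$ one has $d(\Cn_{\tau_n+u_n},\Cn_{\tau_n}) \to 0$ in probability. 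I reduce this to the single estimate that the expected $\ell^1$-displacement of $\Cn$ over a time interval is at most linear in its length, uniformly in $n$ and in the starting state. Throughout I use that $\langle a,\eta\rangle \leq M$ and $\langle b,\eta\rangle \leq M$ for every $\eta \in E$, which follows from $\langle a+b+m,\eta\rangle \leq M$.

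To obtain this estimate I use the jump structure. The total jump rate in a state $\eta$ satisfies
\[
\sum_{(p,p') \in S^2} \len(p,p') \leq \frac{n}{2}\sum_{(p,p') \in S^2} p.p'\, \eta(p)\eta(p') = n\,\langle a,\eta\rangle\langle b,\eta\rangle \leq nM^2,
\]
while each jump changes the configuration by $\frac1n\Delta_{p,p'}$, whose $\ell^1$-norm is at most $3/n$. Hence, by the compensation formula and the strong Markov property,
\[
\E\!\left[\,d(\Cn_{\tau+u},\Cn_{\tau}) \mid \F_\tau\right] \leq \E\!\left[\int_\tau^{\tau+u}\sum_{(p,p')\in S^2}\Big\|\tfrac1n\Delta_{p,p'}\Big\|_1\,\len(p,p')\ds \;\Big|\; \F_\tau\right] \leq 3M^2\,u,
\]
uniformly over stopping times $\tau \leq T$ and over the state. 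Markov's inequality then gives $\P(d(\Cn_{\tau_n+u_n},\Cn_{\tau_n}) \geq \eps) \leq 3M^2 u_n/\eps \to 0$, so Aldous' condition holds and $(P_n)$ is tight in $\DE$.

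Equivalently, one may organize these estimates coordinate by coordinate. On $E$ the $\ell^1$-distance $d$ and the product topology coincide — the constraint $\langle a+b+m,\eta\rangle \leq M$ makes the tails $\sum_{a+b+m \geq R}\eta(p)$ uniformly small — so the continuous coordinate maps $\pi_p : \eta \mapsto \eta(p)$ separate points, and it suffices to prove tightness of each real process $(\Cn_t(p))_t$ in $\DR$. Dynkin's formula gives $\Cn_t(p) = \Cn_0(p) + \int_0^t \Gn\pi_p(\Cn_s)\ds + \Mn_t(p)$, where for $p=(a,b,m)$ the drift satisfies $|\Gn\pi_p(\eta)| \leq C_p$ (the creation term being bounded by $\langle a,\eta\rangle\langle b,\eta\rangle \leq M^2$ and the loss term by $\eta(p)\sum_{q'}p.q'\eta(q') \leq (a+b)M$), and the martingale $\Mn(p)$ has predictable quadratic variation $\langle \Mn(p)\rangle_t \leq C_p\,t/n$, with $C_p$ depending only on $p$ and $M$. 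The equi-Lipschitz drift and the vanishing quadratic variation then give C-tightness of each coordinate, and the limit points are continuous.

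The main obstacle is precisely the uniform bounds above. The jump rate is of order $n$ — there are $\Theta(n)$ coagulations per unit of rescaled time — so any crude estimate diverges; the resolution is that each coagulation moves a coordinate by only $\pm 1/n$, and the moment bounds $\langle a,\eta\rangle,\langle b,\eta\rangle \leq M$ absorb the order-$n$ factor, leaving an $O(1)$ drift and an $O(1/n)$ quadratic variation. One must also check that the diagonal correction in $\len(p,p)$ (the factor $\eta(p)(\eta(p)-1/n)$ rather than $\eta(p)^2$) contributes only a lower-order $O(1/n)$ term, and that the identification of the $\ell^1$ and product topologies on $E$ is what justifies the coordinatewise reduction; these are the slight modifications relative to \cite{Jeon} alluded to above.
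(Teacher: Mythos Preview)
Your argument is correct and rests on the same two estimates the paper uses: the total jump rate in any state is at most $nM^2$, and each jump has $\ell^1$-size at most $3/n$. The organisation differs, however. You invoke Aldous' criterion and bound the expected $\ell^1$-displacement over a short interval via the compensation formula, which is clean and yields the Aldous condition in one line. The paper instead checks the modulus-of-continuity criterion from \cite{EK}, Theorem~7.2: it observes that a large oscillation over an interval of length $\dl$ forces at least $\lceil \b n/3\rceil$ jumps, and since each interjump time stochastically dominates an exponential of rate $M^2 n$, the probability of that many jumps in time $\dl$ is bounded by the tail of a Poisson$(M^2 n\dl)$ variable, which Stirling kills once $3M^2 e\dl/\b<1$. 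Your route avoids the Poisson-domination step and the Stirling estimate, at the price of appealing to the compensator; the paper's route is more elementary but a bit longer. Either way the substance is the same, and your coordinatewise alternative (equi-Lipschitz drift, $O(1/n)$ quadratic variation) is a legitimate third path.
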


\begin{proof}
We will use the classical tightness criterion stated in \cite{JM}, page 34, or in \cite{EK}, Theorem 7.2. For $t \geq 0$, let $P_t^{(n)}$ be the law of $\Cn_t$, which is a probability measure on $E$. Since $E$ is compact by Lemma \ref{Ecomp}, the tightness of the sequence $(P_t^{(n)})_{n \geq 0}$ is obvious.

Now, $\Cn$ is a pure-jump procees on $\frac{1}{n}E_n \subset \E$, with generator $\Gn$. Hence, when the process is in the state $\eta$, then the time before the next jump is exponential with parameter
\[
\len := \sum_{p,p' \in S} \len(p,p') = \frac{1}{2} \left ( \sum_{p, p' \in S} n p.p' \eta(p) \eta(p') - \sum_{p \in S} p.p \eta(p) \right )
\]
and, since $\eta \in E$, $\len \leq M^2 n := cn$. Now take $N > 0$, $\b > 0$, $\eps > 0$, and let $\dl > 0$ such that $N = \dl l$ for some $l \in \N^*$, and $3 c \dl e / \b < 1$. Define now
\[
w^N(Y,\dl) := \inf_{\pi \in \Pi_{\dl}} \max_{t_i \in \pi} \sup_{t_i \leq s < t < t_{i+1}} d(Y_t,Y_s)
\]
$\Pi_{\dl}$ being the set of all subdivisions $0 = t_0 < t_1 < \dots < t_n = N$ of $[0,N]$ such that $t_{i+1} - t_i \geq \dl$ for all $i$. Consider the partition $t_0=0 < t_1 = \dl < \dots < t_l= \l = N$ of $[0,N]$. Let $Z_i := \sup \limits_{t_i \leq s < t < t_{i+1}} d(\Cn_s,\Cn_t)$ for $0 \leq i \leq l-1$. Then
\[
\P_n(w^N(\Cn, \dl) > \b) \leq \P_n \left ( \max_{0 \leq i \leq l-1} Z_i > \b \right ) \leq l \max_{0 \leq i \leq l-1} \P_n(Z_i > \b).
\]
But the size of a jump, that is $d(\Cn_{t^-},\Cn_t)$, is $3/n$. Hence, if $Z_i > \b$, then the process has jumped more than $k:=\lceil \b n / 3 \rceil := \lceil c' n \rceil$ times between $t_i$ and $t_{i+1}$ (where $\lceil x \rceil$ is the first integer strictly greater than $x$). If $S_k$ if the time of the $k$-th jump, the Markov property tell us that
\[
\P_n(Z_i > \b) \leq \P_n(S_k \leq \dl).
\]
But $S_k$ is the sum of $k$ independent exponential random variables, with parameter smaller than $cn$. So, if $S'_k$ is the sum of $k$ independent exponential random variables with parameter $cn$ (on a probability space $(\Omega,\mathcal{A},\P)$), then $S_k$ is stochastically dominated by $S'_k$, that is
\[
\P_n(S_k \leq \dl) \leq \P(S'_k \leq \dl).
\]
To conclude, note that the last term is the probability that a Poisson process with parameter $cn$ jumps more than $k$ times on $[0, \dl]$, and Stirling's formula shows that this tends to zero for $(c \dl e / c') < 1$.
\end{proof}

\subsection{Convergence}

In this section, we prove the convergence of $\Cn$ to a process solving the system \eqref{systbis}, and deduce that it solves Smoluchowski's equation \eqref{syst} before $T_c$.

\begin{prop}
Assume that the following convergences in distribution hold
\begin{itemize}
\item For every $p \in S$, $\Cn_0(p) \to c_0(p)$ for some non random $c_0(p) \geq 0$,
\item $\sum \limits_{(a,b,m) \in S} a \Cn_0(a,b,m) \to 1$,
\item $\sum \limits_{(a,b,m) \in S} b \Cn_0(a,b,m) \to 1$.
\end{itemize}
Let $P$ be a limit point of $(P_n)$, and let $(c_t)$ be a process with law $P$. Then a.s. $(c_t)$ solves the system \eqref{systbis}, with initial conditions $(c_0)$.
\label{propconv}
\end{prop}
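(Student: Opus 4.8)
The plan is to pass to the limit in the martingale problem associated with the generator $\Gn$, and to identify the limit equation as the weak form \eqref{systbis}. The key tool is that for a test function $f$, the process
\[
f(\Cn_t) - f(\Cn_0) - \int_0^t \Gn f(\Cn_s) \, \mathrm{d}s
\]
is a martingale. Choosing $f$ to be a coordinate evaluation (or a bounded truncation thereof), I would compute $\Gn f$ explicitly and show that, as $n \to +\infty$, it converges to the right-hand side of \eqref{systbis} evaluated along the limiting trajectory, while the martingale part vanishes in probability. Since $(P_n)$ is tight by Lemma \ref{Pntendue}, a limit point $P$ exists; I then take $(c_t)$ with law $P$ and verify that it satisfies \eqref{systbis} almost surely.

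First I would fix $p=(a,b,m) \in S$ and take $f_p(\eta) = \eta(p)$ (truncated to be bounded on $E$, which is harmless since $E_n$ has bounded support). A direct computation gives
\[
\Gn f_p(\eta) = \frac{1}{2} \sum_{p' \preceq p} p'.(p \bsl p') \, \eta(p') \eta(p \bsl p') - \sum_{p' \in S} p.p' \, \eta(p) \eta(p') + O(1/n),
\]
where the $O(1/n)$ terms collect the diagonal corrections (the $\eta(p)-1$ versus $\eta(p)$ discrepancy and the self-coagulation bookkeeping) and vanish uniformly on $E$ because $\eta$ has mass bounded by $M$. The crucial simplification specific to this paper is that the rate $p.p' = a'b+ab'$ is a \emph{quadratic} function of the arm-numbers, so the depletion sum $\sum_{p'} p.p' \, \eta(p') = a \langle b, \eta \rangle + b \langle a, \eta \rangle$ depends on $\eta$ only through the mean arm counts $\langle a,\eta\rangle$ and $\langle b,\eta\rangle$. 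By the assumed convergence of the initial arm counts to $1$ and the conservation/decrease properties of these quantities, these means converge to $1/(1+t)$, so that $\sum_{p'} p.p' \, \eta(p') \to \frac{a+b}{1+t} c_t(p)$, which is exactly the linear term in \eqref{systbis}.

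Next I would control the martingale. Its predictable quadratic variation is $\int_0^t \bigl( \Gn f_p^2 - 2 f_p \Gn f_p \bigr)(\Cn_s)\, \mathrm{d}s$, and each jump changes $f_p$ by $O(1)$ while occurring at rate $O(n)$; after the $1/n$ rescaling in $\Cn = \frac1n \Xn_{\cdot/n}$, the bracket is $O(1/n)$ and tends to $0$. Hence by Doob's inequality the martingale term converges to $0$ in $L^2$, uniformly on compacts in $t$. Combining this with the continuous-mapping / Skorokhod-representation argument on the tight sequence, I would pass to the limit along a subsequence realizing $P$ and conclude that $(c_t)$ satisfies the integrated form of \eqref{systbis}; differentiability of $(c_t(p))$ in $t$ then follows because the right-hand side is continuous in $t$, upgrading the integral equation to \eqref{systbis} itself.

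**The main obstacle.** The delicate point is not the bounded-test-function martingale convergence, which is standard, but \textbf{controlling the infinite sums and the tails uniformly in $n$} so that the limit of $\Gn f_p$ genuinely matches \eqref{systbis} along the limiting path rather than losing mass at infinity. Concretely, I must ensure that the convergence $\Cn_s(p) \to c_s(p)$ coordinatewise can be integrated against the quadratic kernel $p.p'$, and that the depletion term $\langle a+b, \Cn_s\rangle$ converges to its limit without escape of arm-mass to large particles. Here I would exploit the moment bound built into $E$ (total mass plus arms bounded by $M$, giving a uniform $\langle a+b+m, \Cn_s\rangle \le M$) together with the monotonicity of the arm means from Lemma \ref{formfaible}; this provides the uniform integrability needed to interchange limit and summation. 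I expect this tail control, and the careful identification of $\langle a,\Cn_s\rangle, \langle b,\Cn_s\rangle \to 1/(1+s)$ in the limit, to be where the real work lies.
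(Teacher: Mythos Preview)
Your approach is essentially the paper's: same coordinate test functions $f_p(\eta)=\eta(p)$, same $O(1/n)$ quadratic-variation bound on the martingale, same reduction of the depletion term to $a\langle b,\Cn_s\rangle + b\langle a,\Cn_s\rangle$, and same Skorokhod-representation passage to the limit. One clarification on the point you flag as the main obstacle: the creation sum $\sum_{p'\preceq p}$ is finite, so no tail control is needed there, and the convergence $\langle a,\Cn_s\rangle,\langle b,\Cn_s\rangle \to 1/(1+s)$ is obtained not via uniform integrability or Lemma~\ref{formfaible} (which concerns the deterministic equation) but by running the \emph{same} martingale argument on the functionals $\eta\mapsto\langle a,\eta\rangle$ and $\eta\mapsto\langle b,\eta\rangle$ themselves---their drifts close up into the system $\dot A=\dot B=-AB$, whose unique solution with $A_0=B_0=1$ is $1/(1+t)$ (this is the paper's Lemma~\ref{convAnBn}).
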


In the following proofs, we take a subsequence of $(\Cn)$ which converges in law to some possibly random $c \in \DE$. For notational simplicity, we will assume that $(\Cn)$ itself tends to $c$. Since $E$ is compact, it is separable, and hence, so is $\DE$. Skorokhod's representation theorem (cf e.g. \cite{EK}) now allows us to assume that the $\Cn$ are defined on the same probability space $(\Om,\mathcal{F},\P)$, that $\Cn \to c$ a.s. (that is, for almost every $\om \in \Om$, the function $\Cn(\om)$ tends to $c(\om)$ for Skorokhod's distance), and that in the statement, there is a.s. convergence. We will also constantly use the fact that for every bounded Borel function $f : E \to \R$, the processes
\begin{equation}
\Cn_t-\Cn_0-\int_0^t \Gn f(\Cn_s) \; \ds := \Mn_t
\label{mt}
\end{equation}
and
\begin{equation}
\left( \Mn_t \right )^2 - \int_0^t \left ( \Gn(f^2)(\Cn_s)-2f(\Cn_s) \Gn f(\Cn_s) \right ) \ds
\label{mt2}
\end{equation}
are martingales. Note also that if $f : E \to \R$ is ``linear'', then for all $\eta \in E$
\begin{equation}
\Gn f (\eta) = \frac{1}{2} \sum_{p,p' \in S} f \left ( \Dp \right ) p.p' \eta(p) \eta(p') - \frac{1}{2n} \sum_{p \in S} f \left ( \Dl_{p,p} \right ) \eta(p)
\label{Gnf}
\end{equation}
and that
\begin{equation}
\left ( \Gn (f^2) - 2 f \Gn f \right ) (\eta) = \frac{1}{2n} \left ( \sum_{p,p' \in S} f \left ( \Dp \right )^2 \eta(p) \eta(p') p.p' - \frac{1}{n} \sum_{p \in S} f \left ( \Dl_{p,p} \right )^2 \eta(p) p.p \right )
\label{Gnf2}
\end{equation}
We will also need the following convergence result.

\begin{lem}
Let
\[
\An_s = \sum_{(a,b,m) \in S} a \Cn_s(a,b,m) \qquad \mathrm{and} \qquad \Bn_s = \sum_{(a,b,m) \in S} b \Cn_s(a,b,m).
\]
Then $(\An)$ and $(\Bn)$ C-converge a.s. to $t \mapsto 1/(1+t)$.
\label{convAnBn}
\end{lem}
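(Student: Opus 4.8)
The plan is to reduce the statement to a single scalar martingale estimate by exploiting the exact conservation of arms in the microscopic model. Each coagulation of a $p$-particle with a $p'$-particle destroys exactly one male and one female arm, so if $N^{(n)}_t$ denotes the number of coagulations performed by $\Cn$ on $[0,t]$ then, pathwise,
\[
\An_t = \An_0 - \tfrac1n N^{(n)}_t, \qquad \Bn_t = \Bn_0 - \tfrac1n N^{(n)}_t.
\]
Hence $\sup_{t}|\An_t-\Bn_t| = |\An_0-\Bn_0| \to 0$, so $\An$ and $\Bn$ share the same C-limit as soon as one of them converges, and both are non-increasing in $t$ and bounded by $M$. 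Since the candidate limit $t \mapsto 1/(1+t)$ is continuous and non-increasing, a Dini-type argument for monotone functions reduces the required locally uniform (C-)convergence to the a.s. convergence $\An_t \to 1/(1+t)$ for each fixed $t$.

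Next I would write the Dynkin decomposition of $\An$. The compensator of $N^{(n)}$ is $\int_0^t \len(\Cn_s)\,ds$, and the total jump rate was computed in the proof of Lemma~\ref{Pntendue} to be $\len(\Cn_s) = n\An_s\Bn_s - \langle ab,\Cn_s\rangle$. Therefore
\[
\An_t = \An_0 - \int_0^t \An_s\Bn_s\,ds + \frac1n\int_0^t \langle ab,\Cn_s\rangle\,ds - \frac1n\widehat N^{(n)}_t,
\]
which is exactly \eqref{mt} for the linear function $f(a,b,m)=a$ combined with \eqref{Gnf}, $\widehat N^{(n)}$ being the compensated part of $N^{(n)}$. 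The martingale term is negligible: its bracket is $n^{-2}\int_0^t\len(\Cn_s)\,ds \le M^2 t/n$ since $\len \le M^2 n$ on $E$, so by \eqref{Gnf2} and Doob's inequality $\sup_{t \le N}|n^{-1}\widehat N^{(n)}_t| \to 0$ in $L^2$, hence a.s. along the Skorokhod subsequence.

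Using $\Bn_s = \An_s + o(1)$ this becomes the closed relation $\An_t \approx \An_0 - \int_0^t \An_s^2\,ds + \frac1n\int_0^t \langle ab,\Cn_s\rangle\,ds$. Because $\len \ge 0$ the correction integrand obeys $0 \le \frac1n\langle ab,\Cn_s\rangle \le \An_s\Bn_s \le M^2$; dropping this non-negative term and comparing with the Riccati equation $\dot y = -y^2$, $y_0 = 1$ (whose solution is $1/(1+t)$), already gives the robust lower bound $\liminf_n \An_t \ge 1/(1+t)$. The whole problem is thus to prove the matching upper bound, namely
\[
\frac1n\int_0^t \langle ab,\Cn_s\rangle\,ds \xrightarrow[n\to\infty]{} 0,
\]
after which a Grönwall/comparison argument forces any C-limit $A$ to solve $\dot A = -A^2$, $A_0 = 1$, i.e. $A_t = 1/(1+t)$, concluding the proof via the reduction of the first paragraph.

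The vanishing of this correction is the main obstacle, and the crude bound $\frac1n\langle ab,\Cn_s\rangle \le M^2$ does not suffice: $\langle ab,\Cn_s\rangle$ can be of order $n$ if the arms concentrate on a single macroscopic particle, which is precisely gelation. I would therefore establish a uniform second-moment bound $\sup_n \sup_{s\le t}\E[\langle a^2+b^2,\Cn_s\rangle] < \infty$ for $t<T_c$. A generator computation shows that the drift of $\E[\langle a^2+b^2,\Cn_s\rangle]$ is of Riccati type with blow-up time $T_c$, and I would close the estimate by comparison with the explicit limiting second moment $\langle a^2-a,c_t\rangle = \g/\big((1+t-t\a)^2 - t^2\g\b\big)$ provided by Theorem~\ref{grosthm}, which stays finite before $T_c$. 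Granting such a bound, $\E[\frac1n\int_0^t\langle ab,\Cn_s\rangle\,ds] \le t\,C(t)/(2n) \to 0$, which upgrades to a.s. convergence along a subsequence. Controlling this second moment uniformly in $n$ — equivalently, ruling out premature gelation in the finite system — is the only genuinely delicate point; everything else is bookkeeping with the conservation identity and Doob's inequality.
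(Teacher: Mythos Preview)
Your proof and the paper's share the opening move---the Dynkin decomposition of $\An$ with a vanishing martingale---but then diverge sharply. The paper never invokes the conservation identity, monotonicity, a Dini argument, or any second-moment estimate: it observes that $(\An)$ and $(\Bn)$ are each tight in $\DR$ (same jump-size bound as in Lemma~\ref{Pntendue}), takes an almost-sure subsequential limit $(A,B)$, passes to the limit in $\An_t=\An_0-\int_0^t\An_s\Bn_s\,ds+\MAn_t$ by dominated convergence, and identifies $(A,B)$ as the unique solution of $A_t=B_t=1-\int_0^t A_sB_s\,ds$, namely $1/(1+t)$, for all $t\ge0$. You are right that the diagonal correction $\frac1n\int_0^t\langle ab,\Cn_s\rangle\,ds$ is present in the true decomposition and is only bounded by $\int_0^t\An_s\Bn_s\,ds$, not obviously $o(1)$; the paper's display~\eqref{AkBk} simply omits it, so the point you isolate as delicate is indeed glossed over in the paper's argument.

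Your proposed repair, however, is where the real gap in your proposal lies. The uniform bound $\sup_n\sup_{s\le t}\E[\langle a^2+b^2,\Cn_s\rangle]<\infty$ for $t<T_c$ is not the one-line comparison you suggest: the pathwise drift of $\langle a^2,\Cn\rangle$ is quadratic in the vector $(\langle a^2\rangle,\langle ab\rangle,\langle b^2\rangle)$, so after taking expectations the system does not close (one cannot replace $\E[\langle a^2\rangle\langle ab\rangle]$ by $\E[\langle a^2\rangle]\,\E[\langle ab\rangle]$), and ``comparison with the limiting second moment from Theorem~\ref{grosthm}'' presupposes convergence of the microscopic to the macroscopic second moments, which is circular. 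This step needs a genuine argument (a differential inequality for the full second-moment vector, or a truncation scheme), not bookkeeping. Even once established, your route yields the lemma only on $[0,T_c)$; the paper claims it on all of $[0,\infty)$, though admittedly only $[0,T_c)$ is used downstream.
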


\begin{proof}
Obviously, we cannot pass to the limit immediately in these expressions. So consider the maps from $E$ to $\R$: $C \mapsto \langle a, C \rangle$ and $C \mapsto \langle b, C \rangle$, which are measurable and bounded (by $M$). By (\ref{mt}), there are martingales $\MAn$ and $\MBn$ such that
\begin{equation}
\An_t=\An_0 - \int_0^t \An_s \Bn_s \; \ds + \MAn_t \quad ; \quad \Bn_t=\Bn_0 - \int_0^t \An_s \Bn_s \; \ds + \MBn_t.
\label{AkBk}
\end{equation}
Now, \eqref{mt2} and \eqref{Gnf2} show that the quadratic variation of $\MAn$ verifies
\[
\left \langle \MAn \right \rangle_t \leq \frac{1}{n} \int_0^t A_s B_s \; \ds \leq \frac{M^2t}{n}.
\]
By Doob's inequality,
\[
\E \left ( \left ( \sup \limits_{0 \leq t \leq T} \MAn_t \right )^2 \right ) \to 0
\]
for all $T > 0$. Hence there is a subsequence of $(\MAn)$ which C-converges a.s. to 0. In particular, it S-converges. For notational simplicity, we will assume that $(\MAn)$ itself converges. For the same reason, we may assume that $(\MBn)$ also S-converges a.s. to 0. 

Now, note that the proof of Lemma \ref{Pntendue} still works for $(\An)$, since the size of its jumps is bounded by $1/n$. So $(\An)$ is tight, and by Prokhorov's theorem, this means that for almost every $\om \in \Omega$, $(\An(\om))_{n \geq 0}$ lies in a compact of $\DE$ (actually, this is a consequence of the proof of Lemma 9, and we do not need this implication of Prokhorov's theorem). The same works for $(\Bn)$, so we can find $\Omega' \subset \Omega$ with $\P(\Omega')=1$, such that for every $\om \in \Omega'$, $C_n(\om) \to c(\om)$ (for Skorokhod's distance), $\MAn(\om) \to 0$ and $\MBn(\om) \to 0$ (compactly), and $(\An(\om))_{n\geq 0}$ and $(\Bn(\om))_{n\geq 0}$ lie in a compact.

Next, fix $\om \in \Omega'$, and let us find the limit of $(\An(\om),\Bn(\om))$ (for the product topology --- which is not Skorokhod's topology on $\mathbb{D}([0,+ \infty),E^2)$). Since it lies in a compact set, it is enough to show that it has only one limit point. So assume $(\An(\om),\Bn(\om))$ converges to some $(A,B)$. Then $(\An_t(\om))$ converges to $A_t$ for every $t \in K$, the set of continuity points of $A$. But $A$ is c\`adl\`ag, so it has only countably many points of discontinuity, and hence $K^c$ has Lebesgue-measure 0. Hence $(\An(\om))$ converges to $A$ Lebesgue-a.s., and ditto for $(\Bn(\om))$. Also, $(\An(\om))$ and $(\Bn(\om))$ are bounded by $M$, so using dominated convergence in (\ref{AkBk}) and recalling that $\An_0(\om)$ and $\Bn_0(\om) \to 1$ by assumption, we obtain
\[
A_t= 1 -  \int_0^t A_s B_s \ds \qquad ; \qquad B_t = 1 - \int_0^t A_s B_s \ds.
\]
Hence
\[
A_t=B_t=\frac{1}{1+t}.
\]
Finally there is only one limit point to $(\An(\om),\Bn(\om))$. So $(\An(\om))$ and $(\Bn(\om))$ both S-converge to $t \mapsto 1/(1+t)$, and, since this function is continuous, they C-converge.
\end{proof}

\begin{rk}
As pointed out in the proof, the convergence of $\An$ and $\Bn$ to the actual number of arms
\[
A_t=\sum_{(a,b,m) \in S} a c_t(a,b,m) \qquad \mathrm{and} \qquad B_t=\sum_{(a,b,m) \in S} b c_t(a,b,m)
\]
is not obvious. There is no such problem for a strictly sublinear coagulation rate (as in Jeon's proof \cite{Jeon}). In our (linear) case, we prove below that this convergence holds before the critical time (we also refer to Norris \cite{Norris} for general sublinear rates in a model with no arms). In fact, if there is a solution $(c_t)$ to \eqref{syst} defined after $T_c$, we believe that $\An_t$ and $\Bn_t$ do not converge to $A_t$ and $B_t$ after $T_c$ (and that this number of arms is then stricly lesser than $1/(1+t)$). This would suggest that $T_c$ is actually a gelation time: some of the arms are lost in a ``gel'' (a particle with an infinite mass and infinitely many arms).
\end{rk}

\begin{proof}[Proof of Proposition \ref{propconv}]
\begin{enumerate}
\item Take some $p_0=(a_0,b_0,m_0) \in S$, and let for $C \in \DE$, $f(C)=C(p_0)$. According to (\ref{mt}),
\begin{equation}
\Cn_t(p_0)-\Cn_0(p_0)-\int_0^t \Gn f(\Cn_s) \; ds := \Mnp_t
\label{Cnt}
\end{equation}
is a martingale. Note also that for $p,p' \in S$, $f(\Dl_{p,p'})$ is 0, except if $p$ or $p'$ or $p \circ p'$ is $p_0$. Hence, it is easy to check using (\ref{mt}) that
\begin{equation}
\begin{split}
\Gn f(\Cn_s)= & - \sum_{p \in S} \Cn_s(p) \Cn_s(p_0) p_0.p + \frac{1}{2} \sum_{p \preceq p_0}  p.(p_0 \bsl p) \Cn_s(p)\Cn_s(p_0 \bsl p) \\
 & - \frac{1}{n} \sum_{p \in S} f \left ( \Dl_{p,p} \right ) \Cn_s(p).
\end{split}
\label{Gn}
\end{equation}
The last term is due to the difference between $\le(p,p')$ when $p \neq p'$ and when $p=p'$. In any case, it tends to 0 uniformly on $\R^+$ and uniformly in $p_0$.

\item Let us now study the martingale term. By Doob's inequality, we have for every $T > 0$
\[
\E \left ( \left ( \sup_{0 \leq t \leq T} \Mnp_t \right )^2 \right ) \leq 4 \E \left ( \left ( \Mnp_T \right ) ^2 \right )
\]
and by (\ref{mt2}), this last term is
\[
\E \left ( \int_0^T (\Gn f^2 - 2 f \Gn f) (\Cn_s) \ds \right ).
\]
But by (\ref{Gnf2}), and since $f \left ( \Dp \right ) \leq M$ for all $p,p' \in S$, then $(\Gn f^2 - 2 f \Gn f) (\Cn_s) \leq M^4/n$, so that
\[
\E \left ( \left ( \sup_{0 \leq t \leq T} \Mnp_t \right ) ^2 \right ) \to 0.
\]
Hence, there is a subsequence of $(\Mnp)$ which a.s. converges to 0 uniformly on $\R^+$. For notational simplicity, we will now assume that $(\Mnp)$ itself C-converges to 0. Using the diagonal method, we may as well assume that $(\Mnp)$ C-converges to 0 for every $p_0 \in S$.

\item We have already seen in the proof of Lemma \ref{Pntendue} that $d(\Cn_t,\Cn_{t^-}) \leq 3 / n$ a.s. By continuity of $X \mapsto \sup_{s\in [0,t]} d(X_{s^-},X_s)$ (cf \cite{EK}), this ensures that $c$ is almost surely continuous , so $\Cn$ actually C-converges to $c$. From the definition of $d$, it is also obvious that $\Cn(p)$ C-converges to $c(p)$ for every $p \in S$.

\item With these results, we may now pass to the limit in \eqref{Cnt} and \eqref{Gn}. Write \eqref{Gn} in the form
\[
\Gn f (\Cn_s)=-\a_n(s)+\frac{1}{2} \b_n(s) + \eps_n(s).
\]
Equation \eqref{Cnt} shows that
\begin{equation}
\Cn_t(p_0) = \Cn_0(p_0) - \int_0^t \a_n(s) \ds + \frac{1}{2} \int_0^t \b_n(s) \ds + \int_0^t \eps_n(s) \ds + \Mn_t.
\label{systn}
\end{equation}
By Point 3, $\Cn_t(p_0)$ C-converges a.s. to $c(p_0)$. $\Cn_0(p_0)$ tends to $c_0(p_0)$ by assumption. $\b_n(t)$ is a finite sum, so
\[
\b_n(t) \to \sum_{p \leq p_0} p.(p_0 \bsl p) \Cn_s(p) \Cn_s(p_0 \bsl p)
\]
compactly. Finally, note that
\begin{align*}
\a_n(s) = & \Cn_s(p_0) \left ( a_0 \sum_{(a,b,m) \in S} b \Cn_s(a,b,m) + b_0 \sum_{(a,b,m) \in S} a \Cn_s(a,b,m) \right )\\
 = & \Cn_s(p_0) (a_0 \Bn_s + b_0 \An_s).
\end{align*}
By Lemma \ref{convAnBn}, $\An_t$ and $\Bn_t$ converge compactly to $t \mapsto 1/(1+t)$, so
\[
\lim_{n \to + \infty} \a_n(s) = \frac{a+b}{1+t} c_s(p_0) \quad \mathrm{a.s.}
\]
compactly. Since these are all compact convergences, we can pass to the limit in (\ref{systn}), for all $p_0 \in S$. This readily shows that $(c(p))$ solves \eqref{systbis}, with initial conditions $(c_0)$.
\end{enumerate}
\end{proof}

\begin{thm}
Assume the same hypothesis as in Proposition \ref{propconv}, and assume as well 
\[
\sum \limits_{(a,b,m) \in S} (a^2+b^2) c_0(a,b,m) < + \infty.
\]
Let $T_c$ be defined as in Definition \ref{Tc}. Then $(\Cn_t)_{t \in [0,T_c)}$ converges (in distribution) to the unique solution of Smoluchowski's equation \eqref{syst}.
\end{thm}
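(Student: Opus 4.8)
The plan is to combine the tightness already established with an identification of all subsequential limits. By Lemma \ref{Pntendue} the sequence $(P_n)$ is tight, hence relatively compact in distribution by Prokhorov's theorem; it therefore suffices to show that every limit point is the law of one and the same deterministic path, and then to identify that path with the Smoluchowski solution on $[0,T_c)$. By Proposition \ref{propconv}, any limit point is carried by a process $(c_t)_{t \ge 0}$ that almost surely solves the modified system \eqref{systbis} with initial data $c_0$ (the standing assumptions $\langle a,c_0\rangle=\langle b,c_0\rangle=1$ and $\langle a^2+b^2,c_0\rangle<+\infty$ needed below follow from the hypotheses).

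First I would argue that \eqref{systbis} pins down a unique, deterministic trajectory. The system is triangular in the mass variable: for $m=1$ the sum over $p'\preceq p$ is empty, so \eqref{systbis} reduces to the linear ODE $\dt c_t(a,b,1)=-\tfrac{a+b}{1+t}c_t(a,b,1)$, solved explicitly; and for general $m$ the equation for $c_t(\cdot,\cdot,m)$ is a first-order linear ODE whose source term involves only masses strictly below $m$, exactly as in the proof of Lemma \ref{possolbis}. An integrating-factor argument identical to that lemma gives existence and uniqueness of the solution for every $t\ge 0$ given $c_0$. Since $c_0$ is non-random, this solution $(c_t)$ is deterministic, so every limit point of $(P_n)$ equals the Dirac mass at $(c_t)$. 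As convergence in distribution to a deterministic limit is equivalent to convergence in probability, the whole sequence $(\Cn)$ converges to $(c_t)$ in $\DE$.

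It remains to recognise $(c_t)$, restricted to $[0,T_c)$, as the solution of Smoluchowski's equation \eqref{syst}. Let $(c_t^{*})_{t\in[0,T_c)}$ denote the coefficients of the unique regular solution of the PDE \eqref{PDE} with initial data $g_0$, produced in Proposition \ref{propPDE}; its fourth part shows that these coefficients also solve \eqref{systbis} on $[0,T_c)$. By the uniqueness just obtained for \eqref{systbis}, we get $(c_t)=(c_t^{*})$ on $[0,T_c)$. Because the generating function of $(c_t^{*})$ is, by construction, a regular solution of \eqref{PDE}, Proposition \ref{propeqsystPDE}(ii) shows that $(c_t^{*})$ solves \eqref{syst} on $[0,T_c)$, and Theorem \ref{grosthm} identifies it as the unique such solution. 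Hence $(\Cn_t)_{t\in[0,T_c)}$ converges in distribution to the unique solution of \eqref{syst}, the restriction to $[0,T_c)$ being harmless since the limit is continuous.

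The delicate point is not the passage to the limit itself but the fact that the microscopic model converges to the modified system \eqref{systbis} rather than directly to \eqref{syst}. The loss term of \eqref{syst} requires the true mean numbers of arms $\An_s,\Bn_s$, whereas Lemma \ref{convAnBn} only controls their limit through the martingale estimate, yielding $\An_s,\Bn_s\to 1/(1+t)$ and not the genuine $\langle a,c_s\rangle,\langle b,c_s\rangle$ (cf.\ the Remark following that lemma). This is precisely why the identification is valid only before $T_c$: it rests on the equivalence of \eqref{systbis}, equivalently \eqref{PDE}, with \eqref{syst}, which Proposition \ref{propeqsystPDE} guarantees only on $[0,T_c)$. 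Past $T_c$ one may no longer substitute $1/(1+t)$ for the number of arms, so the argument --- and conjecturally the convergence to a Smoluchowski solution --- breaks down.
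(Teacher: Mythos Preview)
Your argument is correct and close in spirit to the paper's, but you take a more elementary route at the key step. The paper identifies the limit by passing through the PDE: from the fact that any limit $(c_t)$ solves \eqref{systbis} it forms the generating function, checks it is a regular solution of \eqref{PDE}, invokes Proposition~\ref{propeqsystPDE}(ii) to deduce that $(c_t)$ solves \eqref{syst} on $[0,T_c)$, and finally appeals to the uniqueness in Theorem~\ref{grosthm}. You instead observe that \eqref{systbis} is triangular in the mass variable and hence has a \emph{unique} solution for given $c_0$, by the same integrating-factor induction as in Lemma~\ref{possolbis}; this pins down the limit deterministically without touching the PDE. You then identify this unique solution with the Smoluchowski solution by noting that the coefficients produced in Proposition~\ref{propPDE}(iv) also solve \eqref{systbis} and, via Proposition~\ref{propeqsystPDE}(ii), solve \eqref{syst}. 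Your route avoids verifying regularity of the (a~priori random) generating function and isolates the uniqueness mechanism cleanly; the paper's route has the advantage of making the role of the PDE explicit, which is consistent with the analytical machinery developed earlier. Either way, the restriction to $[0,T_c)$ enters for exactly the reason you highlight: the equivalence between \eqref{systbis} and \eqref{syst} is only available there.
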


\begin{rk}
Obviously, convergence has to be understood with respect to Skorokhod's topology on $[0,T)$ (which is the trace topology of Skorokhod's topology on $[0,+ \infty)$). In particular the sequence of the laws of $(\Cn_t)_{t \in [0,T)}$ is tight.
\end{rk}

\begin{proof}
Let $Q_n$ be the law of $(\Cn_t)_{t \in [0,T_c)}$. The sequence $(Q_n)$ is tight. Let $Q$ one of its limit points, and let $c$ a process with law $Q$. By Proposition \ref{propconv} above, $c$ solves a.s. the system \eqref{systbis}, with initial conditions $(c_0)$. Now, let $g_t(x,y,z)$ the (a priori random) generating function of $c$. It is easy to see that $g$ is well defined for $(t,x,y,z) \in [0,T_c) \times (0,1)^2 \times [0,1]$, and that $g_t(.,.,z)$ is regular for every $z \in [0,1]$. Moreover, we see as in the proof of Proposition \ref{propeqsystPDE} that for every $z \in [0,1]$, $g_t(.,.,z)$ solves the PDE \eqref{PDE} with initial conditions $(x,y) \mapsto g_0(x,y,z)=\sum c_0(a,b,m) x^a y^b z^m$. Hence by Proposition \ref{propeqsystPDE}, $(c_t)$ solves Smoluchowski's equation \ref{syst} until $T_c$. But by Theorem \ref{grosthm}, there is a unique solution to this equation on $[0,T_c)$. Hence there is a unique limit point to $(Q_n)$, so the sequence itself converges to the solution of Smoluchowski's equation on $[0,T_c)$.
\end{proof}

\noindent \textbf{Acknowledgments} This is a part of the author's PhD thesis. I would like to thank my advisors Jean Bertoin and Lorenzo Zambotti for introducing this subject, for their useful advice, and for their encouragement. My thanks also to the referees for their careful check and advice.

\end{document}